\newtheorem{Reduction Rule}{Reduction Rule}
\newtheorem{claim1}{Claim}
\newcommand{\F}{\ensuremath{\mathbb F}}
\newcommand{\ICpartition}{IC-partition}
\newcommand{\containment}{\NP~$\subseteq$~\coNP/poly}
\newcommand{\assume}{\P~$\neq$~\NP}
\newcommand{\equality}{\P~$=$~\NP}
\newcommand{\Oh}{\mathcal{O}}
\newcommand{\rlpartization}{{\sc Vertex Partization}}
\newcommand{\rldeletion}{$(r,\ell)$-{vertex deletion set}}
\newcommand{\rlgraph}{$(r,\ell)$-{graph}}
\newcommand{\rlpartition}{$(r,\ell)$-{partition}}
\newcommand{\rlsplitgraph}{$(r,\ell)$-{split graph}}
\newcommand{\rlsplitpartition}{$(r,\ell)$-{\textsc split partition}}
\newcommand{\YES}{\textsc{YES}}
\newcommand{\NO}{\textsc{NO}}
\newcommand{\cochromp}{\textsc{Cochromatic Number on Perfect Graphs}}
\newcommand{\defparproblem}[4]{
  \vspace{1mm}
\noindent\fbox{
  \begin{minipage}{0.96\textwidth}
  \begin{tabular*}{\textwidth}{@{\extracolsep{\fill}}lr} #1  & {\bf{Parameter:}} #3
\\ \end{tabular*}
  {\bf{Input:}} #2  \\
  {\bf{Question:}} #4
  \end{minipage}
  }
  \vspace{1mm}
}
\newcommand{\defparproblemoutput}[4]{
  \vspace{1mm}
\noindent\fbox{
  \begin{minipage}{0.96\textwidth}
  \begin{tabular*}{\textwidth}{@{\extracolsep{\fill}}lr} #1  & {\bf{Parameter:}} #3
\\ \end{tabular*}
  {\bf{Input:}} #2  \\
  {\bf{Output:}} #4
  \end{minipage}
  }
  \vspace{1mm}
}
\newcommand{\defproblem}[3]{
  \vspace{1mm}
\noindent\fbox{
  \begin{minipage}{0.96\textwidth}
  \begin{tabular*}{\textwidth}{@{\extracolsep{\fill}}lr} #1 \\ \end{tabular*}
  {\bf{Input:}} #2  \\
  {\bf{Question:}} #3
  \end{minipage}
  }
  \vspace{1mm}
}
\newcommand{\NP}{\text{\normalfont  NP}\xspace}
\renewcommand{\P}{\text{\normalfont  P}\xspace}
\newcommand{\coNP}{\text{\normalfont co-NP}\xspace}
\newcommand{\FPT}{\text{\normalfont FPT}\xspace}
\title{Parameterized Algorithms on Perfect Graphs for deletion to $(r,\ell)$-graphs}
\author{Sudeshna Kolay	\inst{1} \and Fahad Panolan\inst{1} \and Venkatesh Raman\inst{1} \and Saket Saurabh \inst{1,2} }
\institute{The Institute of Mathematical Sciences, Chennai, India.  
\email{\{skolay|fahad|vraman|saket\}@imsc.res.in}
\and
{University of Bergen, Norway.  
\email{Saket.Saurabh@uib.no}
}}
\date{}
\begin{document}
\maketitle

\begin{abstract}
For  fixed integers $r,\ell \geq 0$, a graph $G$ is called an {\em $(r,\ell)$-graph} if the vertex set $V(G)$ can be partitioned into $r$ independent sets and $\ell$ cliques. Such a graph is also said to have {\it cochromatic number} $r+\ell$. The class of $(r, \ell)$ graphs generalizes $r$-colourable graphs (when $\ell =0)$ and hence not surprisingly, determining whether a given graph is an $(r, \ell)$-graph is \NP-hard even when $r \geq 3$ or $\ell \geq 3$ in general graphs. 

When $r$ and $\ell$ are part of the input, then the recognition problem is \NP-hard even if the input graph is a perfect graph (where the {\sc Chromatic Number} problem is solvable in polynomial time).
It is also known to be fixed-parameter tractable (FPT) on perfect graphs when parameterized by $r$ and $\ell$. I.e. there is an $f(r+\ell) \cdot n^{\Oh(1)}$ algorithm on perfect graphs on $n$ vertices where $f$ is some (exponential) function of $r$ and $\ell$. Observe that such an algorithm is unlikely on general graphs as the problem is \NP-hard even for constant $r$ and $\ell$.

In this paper, we consider the parameterized complexity of the following problem, which we call {\sc Vertex Partization}. Given a perfect graph $G$ and positive integers $r,\ell,k$  
decide whether there exists a set $S\subseteq V(G)$ of size at most $k$ such that the deletion of $S$ from $G$ results in an $(r,\ell)$-graph. This problem generalizes well studied problems such as {\sc Vertex Cover} (when $r=1$ and $\ell=0$), 
{\sc Odd Cycle Transversal} (when $r=2$, $\ell=0$) and {\sc Split Vertex Deletion} (when $r=1=\ell$).
We obtain the following results:
\begin{enumerate}
\item {\sc Vertex Partization} on perfect graphs is \FPT when parameterized by $k+r+\ell$.
\item The problem does not admit any polynomial sized kernel when parameterized by $k+r+\ell$. 
In other words, in polynomial time, the input graph can not be compressed to an equivalent instance of size polynomial in $k+r+\ell$. In fact, our result holds even when $k=0$.

\item When $r,\ell$ are universal constants, then {\sc Vertex Partization} on perfect graphs, parameterized by $k$, has a polynomial sized kernel.
\end{enumerate}

\end{abstract}

\section{Introduction}

For  fixed integers $r,\ell \geq 0$, a graph $G$ is called an {\em \rlgraph} if the vertex set $V(G)$ can be partitioned into $r$ independent sets and $\ell$ cliques. Many special subclasses of this graph class, such as bipartite, chordal, interval, split and permutation, are well studied in various areas of algorithm design and intractability. For example, $(2,0)$- and  $(1,1)$-graphs correspond to bipartite graphs and split graphs respectively.  
A $(3,0)$-graph is a $3$-colourable graph. Hence, there is a rich dichotomy even with respect to recognition algorithms for \rlgraph{s}. It is well known that we can recognize $(2,0)$- and  $(1,1)$-graphs, on $n$ vertices and $m$ edges, in $\Oh(m+n)$ time. In fact, one can show that recognizing whether a graph $G$ is an \rlgraph, when $r,\ell \leq 2$, can be done in polynomial time \cite{Brandstadt98thecomplexity,Feder03listpartitions}. On the other hand, when either $r \geq 3$ or $\ell \geq 3$, the recognition problem is as hard as the celebrated 
{\sc $3$-Colouring} problem, which is \NP-complete~\cite{Garey:1979:CIG:578533}. Thus, the following problem is \NP-hard:

\defproblem{{\sc Partization Recognition}}{A graph $G$ and positive integers $r,\ell$}{Is $G$ an \rlgraph?} 

{\sc Partization Recognition} has also been studied when the input is restricted to a chordal graph. This restricted problem has a polynomial time algorithm~\cite{DBLP:journals/endm/FederHR11}. On the other hand, when the input graphs are restricted to perfect graphs, {\sc Partization Recognition} is \NP-complete. It was shown in \cite{HeggernesKLRS13}, that the problem, when parameterized by $r+\ell$, has an \FPT{} algorithm, i.e, an algorithm that runs in time $f(r+\ell)\cdot n^{\Oh(1)}$ for a computable function $f$. A natural extension to {\sc Partization Recognition} is the \rlpartization{} problem. The problem is formally stated below:

\defparproblem{{\rlpartization}}{A graph $G$ and positive integers $r,\ell,k$}{$r,\ell,k$}{Is there a vertex subset $S\subseteq V(G)$ of size at most $k$ such that $G-S$ is an \rlgraph?} 
 \smallskip

%

\noindent 
Because of the \NP-hardness of the {\sc $3$-Colouring} problem, we do not expect to obtain an \FPT{} algorithm on general graphs, parameterized by $k+r+\ell$, for \rlpartization, when $r\geq 3$ or $\ell \geq 3$. It has been shown in \cite{KolayP15,DBLP:journals/corr/BasteFKS15} that, for all other combinations of $r$ and $\ell$, namely when $0\leq r,\ell \leq 2$, \rlpartization{} has an \FPT{} algorithm with running time $3.3146^k  n^{\Oh(1)}$. Various special cases of this problem are very well studied. 
When $r=1$ and $\ell=0$, the problem is the same as the celebrated {\sc Vertex Cover} problem, which has been extensively studied in parameterized complexity, and the current fastest algorithm runs in time $1.2738^kn^{\Oh(1)}$ and has a kernel with $2k$ vertices~\cite{DBLP:conf/mfcs/ChenKX06} 
(A brief overview of paramaterized 
complexity is given in the preliminaries).

For $r=2$ and $\ell=0$, the problem is the same as {\sc Odd Cycle Transversal} (OCT) whose parameterized complexity was settled 
by Reed et al.~\cite{ReedSV04} which also used the iterative compression technique 
(which is also the technique we use to show that \rlpartization{} on perfect graphs has an {\FPT} algorithm in this paper) for the first time.
The current best algorithm for the problem is by Lokshtanov et al.~\cite{LokshtanovNRRS14} with a running time of 
$2.3146^kn^{\Oh(1)}$ that uses a branching algorithm based on linear programming.

When $r=\ell=1$, the \rlpartization ~ problem is the same as {\sc Split Vertex Deletion (SVD)} for which 
Ghosh et al.~\cite{DBLP:conf/swat/GhoshKKMPRR12} designed an algorithm with running time
$2^kn^{\Oh(1)}$. 
They also gave the best known polynomial kernel for {\sc SVD}. Later, Cygan and  Pilipczuk~\cite{DBLP:journals/ipl/CyganP13} designed an algorithm for {\sc SVD} running in time $1.2738^{k+o(k)}n^{\Oh(1)}$. 

As in the case of {\sc Partization Recognition}, the {\rlpartization} problem  has  also been studied when 
the input graph is restricted to a non-trivial graph class. By a celebrated result of Lewis and Yannakakis \cite{LewisY80}, this problem is \NP-complete even when restricted to the perfect graph class. In fact, {\sc Odd Cycle Transversal (OCT)} restricted to perfect graphs is \NP-hard, because of this result. Thus, we cannot expect an \FPT{} algorithm for \rlpartization{} parameterized by $r+\ell$, unless \equality. Moreover, because of \NP-hardness of {\sc Partization Recognition} on perfect graphs, we do not expect \rlpartization{} on perfect graphs to be \FPT{} when parameterized by $k$ alone, again under the assumption that \assume. Krithika and Narayanaswamy~\cite{DBLP:journals/jgaa/KrithikaN13} studied {\rlpartization} problems on perfect graphs, and among several results they obtain $(r+1)^k n^{\Oh(1)}$ algorithm for {\sc Vertex $(r,0)$-Partization}  on perfect graphs. In this paper, we generalize this for all values of $r$ and $\ell$. In other words, we show that \rlpartization{} on perfect graphs, parameterized by $k+r+\ell$, is \FPT.

\smallskip

\noindent 
{\bf Our Results and Methods.}  For {\sc Vertex Partization} on perfect graphs, parameterized by $k+r+\ell$, we give an \FPT{} algorithm using the method of iterated compression. This algorithm is inspired by the  
 \FPT{} algorithm for \cochromp, given in \cite{HeggernesKLRS13}. 

Then, we obtain a negative result for {\em kernelization} for \rlpartization{} on perfect graphs. 
We show that \rlpartization{} cannot have a polynomial kernel unless \containment.
This is shown by exhibiting a polynomial parameter transformation from {\sc CNF-SAT}. In fact, 
our result holds even when $k=0$ and either $r$ or $\ell$ is one. Thus, we show that {\sc Partization Recognition} 
(also known as the {\sc Cochromatic Number} problem~\cite{HeggernesKLRS13}) does not admit a polynomial kernel on perfect graphs unless \containment.  See Section~\ref{prelims} 
for the definition of polynomial parameter transformation and kernelization. 

Finally, we consider the following parameterized problem:

\defparproblem{{\sc Vertex $(r,\ell)$ Partization}}{A graph $G$ and a positive integer $k$}{$k$}{Is there a vertex subset $S\subseteq V(G)$ of size at most $k$ such that $G-S$ is an \rlgraph?} 

\noindent 
For each pair of constants $r$ and $\ell$, we give a kernelization algorithm for the above parameterized problem. 
To arrive at the kernelization algorithm, we consider a slightly larger class of graphs called {\em $(r,\ell)$-split graphs}. A graph $G$ is an \rlsplitgraph{} if its vertex set can be partitioned into $V_1$ and $V_2$ such that the size of a largest clique in $G[V_1]$ is bounded by $r$ and the size of a largest independent set in $G[V_2]$ is bounded by $\ell$. Such a two-partition for the graph $G$ is called as \rlsplitpartition. The notion of \rlsplitgraph{s} was introduced in~\cite{Gyarfas98}. 
For any fixed $r$ and $\ell$, there is a finite forbidden set $\F_{r,\ell}$ for \rlsplitgraph{s}~\cite{Gyarfas98}. That is,  
a graph $G$ is an \rlsplitgraph{} if and only if $G$ does not contain any graph $H\in \F_{r,\ell}$ as an induced subgraph. The size of the largest forbidden graph is bounded by $f(r,\ell)$, for some function $f$ only of $r$ and $\ell$~\cite{Gyarfas98}. We use this to design the kernelization algorithm. 

%

\section{Preliminaries}
\label{prelims}
We use $[n]$ to denote $\{1,\ldots,n\}$.
We use standard notations from graph theory \cite{diestel}.  The vertex set and edge set of a graph are  
denoted as $V(G)$ and $E(G)$ respectively. The complement of the graph $G$, denoted by $\overline{G}$ has $V(G)$ as 
its vertex set and ${V(G)\choose 2}-E(G)$ as its edge set.  Here, ${V(G)\choose 2}$ denotes the family of two sized 
subsets of $V(G)$. 
The neighbourhood of a vertex $v$ is represented as $N_G(v)$, or, when the context of the graph is clear, 
simply as $N(v)$. An induced subgraph of $G$ on the vertex set $V'\subseteq V$ is written as $G[V']$. 
For a vertex subset $V' \subseteq V$, $G[V- V']$ is also denoted as $G - V'$. 
We denote by $\omega(G)$ the size of a maximum clique in $G$. Similarly, $\alpha(G)$ denotes the size of a maximum independent set in $G$. The chromatic number of $G$ is denoted by $\chi(G)$.   In this paper, we consider the class of \rlgraph{s}, The following is the formal definition of this graph class.
\begin{definition}[{\rlgraph}]
 A graph $G$ is an \rlgraph{} if its vertex set can be partitioned into $r$ independent sets and $\ell$ 
cliques. We call such a partition of $V(G)$ an \rlpartition. 
\end{definition}
For a graph $G$, we say $S\subseteq V(G)$ is an {\em \rldeletion}, if $G- S$ is an \rlgraph. 

\begin{definition}[{\ICpartition}]
 An \ICpartition, of an \rlgraph{} $G$, is a partition $(V_1,V_2)$ of 
$V(G)$ such that $G[V_1]$ can be partitioned into $r$ independent sets and $G[V_2]$ can be partitioned into $\ell$ cliques.   
\end{definition}

%
A graph $G$ is a {\em perfect graph} if, for every induced subgraph $H$, $\chi (H) = \omega (H)$. We also need the following characterisation of perfect graphs -- also known as strong perfect graph theorem. 
\begin{proposition}[\cite{CRST06}] \label{prop:perfect_graph_char}
A graph $G$ is perfect if and only if $G$ does not have, as an induced subgraph, an odd cycle of length at least $5$ or its complement.  
\end{proposition}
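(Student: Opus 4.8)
The statement is the Strong Perfect Graph Theorem, and a genuinely self-contained proof is far beyond a short sketch; what follows is the strategy underlying \cite{CRST06}. The plan is to treat the two directions separately. For the forward direction I would argue by contraposition, using that perfection is hereditary (immediate from the definition, since an induced subgraph of an induced subgraph is again an induced subgraph) together with the fact that the two forbidden configurations are themselves imperfect. Concretely, an induced odd cycle $C_{2k+1}$ with $k \geq 2$ satisfies $\omega = 2$ but $\chi = 3$, while its complement satisfies $\omega(\overline{C_{2k+1}}) = k$ but $\chi(\overline{C_{2k+1}}) = k+1$; hence no perfect graph can contain either as an induced subgraph. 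This direction is elementary and needs only the clique and chromatic numbers of cycles and their complements.

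The reverse direction --- that every graph with no odd hole and no odd antihole (a \emph{Berge graph}) is perfect --- is the real content. I would prove it by induction on $|V(G)|$, organised around a structural dichotomy: every Berge graph is either \emph{basic} or admits a prescribed \emph{decomposition}. The basic classes are bipartite graphs, line graphs of bipartite graphs, the complements of these two families, and double split graphs, each of which is verified directly to be perfect. The decompositions are the $2$-join, a $2$-join in the complement, and the balanced skew partition. The second ingredient is a collection of preservation lemmas stating that none of these decompositions can occur in a minimal imperfect graph; equivalently, if $G$ admits such a decomposition and all its proper induced subgraphs are perfect, then $G$ itself is perfect. Granting the dichotomy and the preservation lemmas, the induction closes at once: a smallest imperfect Berge graph would be basic, contradicting perfection of the basic classes, or decomposable, contradicting a preservation lemma applied to its (perfect, smaller) pieces.

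The main obstacle, carrying essentially the entire weight of \cite{CRST06}, will be establishing the structural dichotomy itself: that a Berge graph which is not basic must admit one of the listed decompositions. The approach there is a long case analysis driven by specific induced configurations --- for instance a suitably attached prism, a long antihole, or a wheel-like configuration --- each of which is shown to force a $2$-join or a balanced skew partition, with the bipartite-like and line-graph cases arising exactly when no such configuration is present. This classification is enormously technical and spans the bulk of the original argument, so I would treat it as a black box. Since the present paper uses only the resulting equivalence ``perfect $=$ Berge,'' citing \cite{CRST06} for the dichotomy while checking the elementary forward direction and the decomposition-preservation lemmas is all that our development actually requires.
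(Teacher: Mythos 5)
The paper does not prove this proposition at all --- it is imported verbatim from \cite{CRST06} as a known result --- and your write-up likewise ends up deferring the substantive direction to that reference while correctly checking the elementary facts ($\omega(C_{2k+1})=2<3=\chi(C_{2k+1})$ and $\omega(\overline{C_{2k+1}})=k<k+1=\chi(\overline{C_{2k+1}})$ for $k\geq 2$, plus heredity of perfection), so your treatment is consistent with how the paper uses the statement. The only quibble, immaterial since you black-box the dichotomy anyway, is that the decomposition theorem as proved in \cite{CRST06} also allows a proper homogeneous pair ($M$-join) as one of the outcomes; its redundancy was only established later.
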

This  tells us that perfect graphs are closed under complementation. 
However, this was proved earlier and was called  weak perfect graph theorem. 
 \begin{lemma}[\cite{Lovasz72}]\label{pg:complement}
 $G$ is a perfect graph if and only if $\overline{G}$ is a perfect graph.
 \end{lemma}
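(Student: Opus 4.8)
The plan is to deduce this lemma directly from the Strong Perfect Graph Theorem, stated above as Proposition~\ref{prop:perfect_graph_char}, exploiting the fact that the forbidden family characterising perfect graphs is itself closed under complementation. The single structural fact I would establish first is that taking complements commutes with taking induced subgraphs: for any $S \subseteq V(G)$ one has $\overline{G}[S] = \overline{G[S]}$, since a pair of vertices in $S$ is adjacent in $\overline{G}[S]$ exactly when it is non-adjacent in $G[S]$. Consequently, if $H$ is an induced subgraph of $G$ on vertex set $S$, then $\overline{H}$ is an induced subgraph of $\overline{G}$ on the same set, and conversely.

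With this in hand, the core step is to observe that the two types of forbidden configurations in Proposition~\ref{prop:perfect_graph_char} are interchanged by complementation. An odd cycle of length at least $5$ (an odd hole) has as its complement precisely an odd antihole, so the two families together form a set $\mathcal{F}$ of forbidden induced subgraphs satisfying $\overline{\mathcal{F}} = \mathcal{F}$. Combining this with the previous paragraph: $G$ contains some $H \in \mathcal{F}$ as an induced subgraph if and only if $\overline{G}$ contains $\overline{H} \in \mathcal{F}$ as an induced subgraph. Hence $G$ is free of all forbidden subgraphs exactly when $\overline{G}$ is, and by Proposition~\ref{prop:perfect_graph_char} this says precisely that $G$ is perfect if and only if $\overline{G}$ is perfect.

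Under this route there is essentially no obstacle: once Proposition~\ref{prop:perfect_graph_char} is granted, the argument is a one-line symmetry check. The only subtlety is historical and methodological, in that the statement (due to Lovász) was established well before the Strong Perfect Graph Theorem, so invoking the latter to derive the former is logically sound but uses a far deeper result than necessary. If instead I wanted a self-contained proof that does not appeal to Proposition~\ref{prop:perfect_graph_char}, I would prove the complementation-invariant characterisation that $G$ is perfect if and only if $|V(H)| \le \alpha(H)\cdot \omega(H)$ for every induced subgraph $H$ of $G$. This condition is manifestly preserved under complementation, since $\alpha(\overline{H}) = \omega(H)$, $\omega(\overline{H}) = \alpha(H)$, and $|V(\overline{H})| = |V(H)|$, so it would again yield the lemma immediately. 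In that approach the genuine work, and the main obstacle, is establishing the characterisation itself, which rests on Lovász's replication lemma (that duplicating a vertex into a non-adjacent twin preserves perfection) together with a counting argument bounding the vertex count by the product of a maximum clique and a maximum independent set.
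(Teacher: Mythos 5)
Your proposal is correct and follows exactly the route the paper itself indicates: immediately before stating the lemma, the paper remarks that Proposition~\ref{prop:perfect_graph_char} (the strong perfect graph theorem) ``tells us that perfect graphs are closed under complementation,'' which is precisely your argument that complementation commutes with taking induced subgraphs and swaps odd holes with odd antiholes, so the forbidden family is invariant. Your additional remarks on the historical ordering and the self-contained $|V(H)| \le \alpha(H)\cdot\omega(H)$ route are accurate but go beyond what the paper does, since the paper simply cites Lov\'asz and offers no proof of its own.
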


Moreover, this class is well known for its tractability for \NP-hard problems such as   
{\sc Chromatic Number}, {\sc Maximum Independent Set} and {\sc Maximum Clique}.

\begin{lemma}[\cite{GrotschelLS84}]\label{pg:colindclicomp}
Let $G$ be a perfect graph. Then there exist  polynomial time algorithms for {\sc Chromatic Number}, 
{\sc Maximum Independent Set} and {\sc Maximum Clique}. 
\end{lemma}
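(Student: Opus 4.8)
The plan is to route everything through the Lov\'asz theta function $\vartheta(\cdot)$, which is squeezed between the independence number and the clique-cover number of an arbitrary graph and which collapses onto both quantities precisely when the graph is perfect. First I would define $\vartheta(G)$ as the optimum of the semidefinite program
\[
\vartheta(G)=\max\bigl\{\langle J,X\rangle : \mathrm{tr}(X)=1,\ X_{ij}=0 \text{ for all } ij\in E(G),\ X\succeq 0\bigr\},
\]
where $J$ is the all-ones matrix and $X$ ranges over real symmetric matrices indexed by $V(G)$. The feasible set is a spectrahedron admitting a polynomial-time separation oracle (an eigenvalue computation certifies the positive-semidefinite constraint, and the affine constraints are checked directly), so the ellipsoid method approximates $\vartheta(G)$ to any prescribed additive error $\varepsilon$ in time polynomial in $n$ and $\log(1/\varepsilon)$.

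Next I would establish the sandwich inequalities $\alpha(G)\le\vartheta(G)\le\overline{\chi}(G)$, where $\overline{\chi}(G)$ denotes the minimum number of cliques needed to cover $V(G)$. The lower bound is immediate: for a maximum independent set $S$ the rank-one matrix $X=\tfrac{1}{|S|}\mathbf{1}_S\mathbf{1}_S^{\top}$ is feasible, has trace $1$, vanishes on every edge (each edge misses $S$ in at least one endpoint), and attains objective value $|S|=\alpha(G)$. The upper bound $\vartheta(G)\le\overline{\chi}(G)$ is the delicate direction; I would obtain it by constructing, from a cover of $V(G)$ into $\overline{\chi}(G)$ cliques, an orthonormal representation of $\overline{G}$ (equivalently, a feasible solution of the dual program) that bounds the objective of every primal $X$ by the number of cliques used.

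Now I would specialise to perfect graphs. Since this class is closed under complementation (\autoref{pg:complement}), $\overline{G}$ is perfect whenever $G$ is, and perfection yields $\alpha(G)=\omega(\overline{G})=\chi(\overline{G})=\overline{\chi}(G)$; hence the sandwich degenerates to $\vartheta(G)=\alpha(G)=\overline{\chi}(G)$. As $\alpha(G)$ is an integer, it is recovered exactly by computing $\vartheta(G)$ to additive error below $\tfrac12$ and rounding, which settles {\sc Maximum Independent Set}. Running the same argument on $\overline{G}$ gives $\vartheta(\overline{G})=\alpha(\overline{G})=\omega(G)$ and, via $\overline{\chi}(\overline{G})=\chi(G)$, the chromatic value as well, so a single theta computation on the complement resolves both {\sc Maximum Clique} and {\sc Chromatic Number}.

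Finally, to return the actual objects rather than only their sizes, I would invoke self-reducibility, each step of which costs one further polynomial-time theta evaluation: a maximum independent set is assembled vertex by vertex by testing whether $\alpha(G-v)=\alpha(G)$ (delete $v$) or $\alpha(G-v)<\alpha(G)$ (commit $v$ and delete its closed neighbourhood), with the complementary procedure producing a maximum clique; an optimal colouring is built by repeatedly extracting an independent set meeting every maximum clique and peeling it off, which keeps the clique number dropping by one on a still-perfect remainder. I expect the two main obstacles to be exactly the upper bound $\vartheta(G)\le\overline{\chi}(G)$ and the numerical bookkeeping of the ellipsoid step: one must argue that although $\vartheta$ is generally irrational, on the perfect induced subgraphs arising in the self-reduction a polynomially bounded number of accuracy bits always suffices to certify the correct integer, and that extracting an independent set hitting all maximum cliques can itself be expressed through theta computations.
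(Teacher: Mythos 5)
The paper does not prove this lemma at all: it is imported verbatim as a black box from Gr\"otschel--Lov\'asz--Schrijver, so there is no internal proof to compare against. Your sketch is precisely the argument of that cited source --- the semidefinite formulation of $\vartheta$, the sandwich $\alpha(G)\le\vartheta(G)\le\overline{\chi}(G)$ collapsing on perfect graphs via closure under complementation, ellipsoid-based approximation to additive error below $\tfrac12$ followed by rounding, and self-reduction to recover witnesses --- and it is correct in outline. Be aware, though, that the two items you defer are not minor bookkeeping but the substantive content of the original paper: the bound $\vartheta(G)\le\overline{\chi}(G)$ needs the orthonormal-representation duality you allude to, and the colouring step (finding a stable set meeting every maximum clique) requires the \emph{weighted} theta function, since GLS accumulate a list of maximum cliques and repeatedly compute a maximum-weight stable set for weights encoding how many listed cliques each vertex must hit; the unweighted $\vartheta$ alone does not suffice for that extraction. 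As long as you intend those steps to be filled in from the cited reference rather than reproved, your proposal is a faithful reconstruction of the proof the paper is relying on.
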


\begin{proposition} {\rm \bf (\cite[Lemma 3]{HeggernesKLRS13}).}
\label{prop:polytime_algorithms}
Given a perfect graph $G$ and an integer $\ell$, there is a polynomial time algorithm to output
\begin{enumerate}
\vspace{-2mm}
\item[(a)] either a partition of $V(G)$ into at most $\ell$ independent sets or a clique of size $\ell+1$, and
\item[(b)] either a partition of $V(G)$ into at most $\ell$ cliques or an independent set of size $\ell+1$.
\end{enumerate}
\end{proposition}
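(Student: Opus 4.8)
The plan is to prove (a) directly using the polynomial-time solvability of {\sc Chromatic Number} on perfect graphs together with the identity $\chi(G)=\omega(G)$, and then to derive (b) from (a) by passing to the complement graph.

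For part (a), I would first invoke \autoref{pg:colindclicomp} to compute, in polynomial time, both the chromatic number $\chi(G)$ and an associated optimal proper colouring of $G$; such a colouring is a partition of $V(G)$ into exactly $\chi(G)$ independent sets. I would then branch on the value of $\chi(G)$. If $\chi(G)\le \ell$, the computed colouring is already a partition of $V(G)$ into at most $\ell$ independent sets, and I output it. Otherwise $\chi(G)\ge \ell+1$, and since $G$ is perfect we have $\omega(G)=\chi(G)\ge \ell+1$. I would then use \autoref{pg:colindclicomp} again to compute a maximum clique of $G$ in polynomial time; this clique has size $\omega(G)\ge \ell+1$, so discarding all but $\ell+1$ of its vertices yields a clique of size exactly $\ell+1$, which I output.

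For part (b), I would apply part (a) to the complement graph $\overline{G}$. By \autoref{pg:complement}, $\overline{G}$ is again perfect, so part (a) runs on it in polynomial time and produces either a partition of $V(\overline{G})$ into at most $\ell$ independent sets or a clique of size $\ell+1$ in $\overline{G}$. Translating through complementation, an independent set of $\overline{G}$ is a clique of $G$ and a clique of $\overline{G}$ is an independent set of $G$; hence the first outcome gives a partition of $V(G)$ into at most $\ell$ cliques and the second gives an independent set of $G$ of size $\ell+1$, exactly as required. Since complementation takes polynomial time, the whole procedure remains polynomial.

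The argument is essentially a dispatch to the known algorithms of \autoref{pg:colindclicomp}, so there is no deep combinatorial obstacle. The one point that must be handled carefully is that \autoref{pg:colindclicomp} is used not merely to compute the \emph{values} $\chi(G)$ and $\omega(G)$, but to return the witnessing objects, namely an optimal colouring and a maximum clique. I would note that the cited algorithm indeed produces these explicit structures in polynomial time, so that both the partition in the positive case and the clique (or independent set) certificate in the negative case are genuinely available.
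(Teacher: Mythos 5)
Your proof is correct: the paper does not prove this proposition but simply imports it by citation from \cite{HeggernesKLRS13}, and your derivation — an optimal colouring via \autoref{pg:colindclicomp}, the perfectness identity $\chi=\omega$ to extract an $(\ell+1)$-clique when the colouring is too large, and complementation via \autoref{pg:complement} for part (b) — is exactly the standard argument behind the cited lemma. You also correctly flag the only delicate point, namely that the Gr\"otschel--Lov\'asz--Schrijver algorithms return explicit witnesses (a colouring and a maximum clique), not just the optimal values.
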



%
%
%
%
%
%
%
%

%

\noindent
\textbf{Parameterized Complexity.} A parameterized problem $\Pi$ is a subset of $\Gamma^{*}\times\mathbb{N}$, where $\Gamma$ is a finite alphabet. An instance of a parameterized problem is a tuple $(x,k)$, where $x$ is a classical problem instance, and $k$ is called the parameter. A central notion in parameterized complexity is {\em Fixed Parameter Tractability (FPT)}. A parameterized problem $\Pi$ is in \FPT{} if there is an algorithm that takes an instance $(x,k)$ and decides if $(x,k)\in \Pi$ in time $f(k)\cdot |x|^{\Oh(1)}$. Here, $f$ is an arbitrary function of $k$. Such an algorithm is called a \emph{Fixed Parameter Tractable} algorithm and, in short, an \FPT{} algorithm.

\medskip
\noindent
\textbf{Kernelization.} A kernelization algorithm for a parameterized problem 
$\Pi\subseteq \Gamma^{*}\times \mathbb{N}$ is an algorithm that, given $(x,k)\in \Gamma^{*}\times \mathbb{N} $, 
outputs, in time polynomial in $|x|+k$, a pair $(x',k')\in \Gamma^{*}\times \mathbb{N}$ such that 
(a) $(x,k)\in \Pi$ if and only if  $(x',k')\in \Pi$ and (b) $|x'|,k'\leq g(k)$, where $g$ is some computable function. 
The output instance $x'$ is called the kernel, and the function $g$ is referred to as the size of the kernel. 
If $g(k)=k^{\Oh(1)}$  then we say that $\Pi$ has a polynomial  kernel.

\noindent
{\bf Lower bounds in Kernelization.}
In recent years, several techniques have been developed to show that certain parameterized problems  cannot have any polynomial sized kernel unless some classical complexity assumptions are violated. One such technique is 
the {\em polynomial parameter transformation}. 

\begin{definition}
 Let $\Pi,\Gamma$ be two parameterized problems. A polynomial time algorithm $\mathcal A$ is called a polynomial parameter transformation (or ppt) from $\Pi$ to $\Gamma$ if , given an instance $(x,k)$ of $\Pi$, $\mathcal A$ outputs in polynomial time an instance $(x',k')$ of $\Gamma$ such that $(x,k) \in \Pi$ if and only if $(x',k')\in \Gamma$ and $k' \leq k^{\Oh(1)}$.
\end{definition}

We use the following theorem together with ppt reductions to rule out polynomial kernels. 
\begin{proposition}[\cite{BodlaenderTY11}] \label{prop:ppt_redn}
Let $\Pi,\Gamma$ be two parameterized problems such that $\Pi$ is \NP-hard, $\Gamma \in \NP$ and   there exists a polynomial parameter transformation from $\Pi$ to $\Gamma$. Then, if $\Pi$ does not admit a polynomial kernel neither does $\Gamma$.
\end{proposition}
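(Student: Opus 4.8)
The plan is to prove the statement in its contrapositive form: I would assume that $\Gamma$ admits a polynomial kernel and from it construct a polynomial kernel for $\Pi$. The whole argument is a composition of three polynomial-time transformations, chained so that each equivalence is preserved and every size bound stays polynomial in the original parameter $k$.

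Given an instance $(x,k)$ of $\Pi$, the first step is to apply the polynomial parameter transformation $\mathcal A$ from $\Pi$ to $\Gamma$, obtaining an equivalent instance $(x',k')$ of $\Gamma$ with $k' \leq k^{\Oh(1)}$. The second step is to run the assumed polynomial kernelization for $\Gamma$ on $(x',k')$; this yields an equivalent instance $(x'',k'')$ of $\Gamma$ whose total size and parameter are bounded by $g(k')$ for some polynomial $g$, hence by $k^{\Oh(1)}$ since $k'$ is itself polynomial in $k$. At this point I have a small instance, but it belongs to $\Gamma$ rather than $\Pi$, so a final translation is needed.

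The key step is to exploit the complexity hypotheses to translate back into $\Pi$. Since $\Gamma \in \NP$ and $\Pi$ is \NP-hard, the unparameterized version of $\Gamma$ reduces to that of $\Pi$ by a classical polynomial-time many-one (Karp) reduction. I would encode $(x'',k'')$ as a single string of its unparameterized version (of length $k^{\Oh(1)}$), feed it through this Karp reduction, and read the output back as a parameterized $\Pi$-instance $(y,p)$. Because the reduction runs in time polynomial in the length of its input, the output string has length $k^{\Oh(1)}$, so both $|y|$ and the recovered parameter $p$ are bounded by $k^{\Oh(1)}$. Tracing the chain of equivalences gives $(x,k) \in \Pi \iff (x',k') \in \Gamma \iff (x'',k'') \in \Gamma \iff (y,p) \in \Pi$, so $(y,p)$ is a genuine polynomial kernel for $\Pi$.

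The step I expect to be the crux is this final translation, together with the bookkeeping that certifies the recovered parameter $p$ stays polynomial in $k$. The Karp reduction only controls classical instance sizes, so the argument must route the parameter through the unparameterized encoding and back, relying on $\Gamma \in \NP$ (to have a reduction out of $\Gamma$ at all) and on the \NP-hardness of $\Pi$ (to have it land in $\Pi$). Once this is in place, the composition of the ppt, the kernel for $\Gamma$, and the Karp reduction is a polynomial-time algorithm producing an equivalent $\Pi$-instance of size $k^{\Oh(1)}$, that is, a polynomial kernel for $\Pi$; contrapositively, if $\Pi$ admits no polynomial kernel then neither does $\Gamma$. Everything else is routine composition of polynomial-time maps and their size bounds.
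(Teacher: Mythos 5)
Your proof is correct. The paper itself gives no proof of this proposition---it is imported verbatim with a citation to the source---and your argument is exactly the standard one from that reference: compose the ppt with the assumed polynomial kernel for $\Gamma$ and then with a Karp reduction from the unparameterized version of $\Gamma$ back to $\Pi$ (which exists because $\Gamma\in\NP$ and $\Pi$ is \NP-hard), and you correctly identify the one delicate point, namely that the parameter must be routed through the unparameterized (unary) encoding so that the recovered parameter is bounded by the output length and hence stays polynomial in $k$.
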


\begin{proposition}[\cite{FortnowS11}] \label{prop:CNF_no_poly_kernel}
 {\sc CNF-SAT} is \FPT{} parameterized by  the number of variables; however, it 
 does not admit a polynomial kernel unless \containment. 
\end{proposition}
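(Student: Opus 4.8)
The plan is to treat the two halves of the statement separately, since they are of entirely different character. The positive part---that \textsc{CNF-SAT} is \FPT{} when parameterized by the number $n$ of variables---is immediate: on a formula $\phi$ with $n$ variables one enumerates all $2^n$ truth assignments and tests each against the clauses in time polynomial in $|\phi|$, giving a $2^n\cdot|\phi|^{\Oh(1)}$ algorithm. Thus the entire content of the statement lies in the kernelization lower bound.

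For the lower bound I would invoke the composition framework of \cite{BodlaenderTY11,FortnowS11} and supply the one problem-specific ingredient it needs, namely an \emph{OR-composition} (a cross-composition, in later terminology) of the \NP-hard language $L:=\textsc{CNF-SAT}$ into the parameterized problem $(\textsc{CNF-SAT},n)$. Given $t$ formulas $\phi_1,\dots,\phi_t$, which after grouping by a polynomial equivalence relation we may assume share the variable set $\{x_1,\dots,x_n\}$ and have $t=2^s$ (pad with a fixed unsatisfiable formula), I introduce $s=\lceil\log t\rceil$ fresh \emph{selector} variables $z_1,\dots,z_s$. For each index $i$ let $\neg\sigma_i$ be the single clause that is false exactly when $z_1\dots z_s$ spells the binary encoding of $i$. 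The composed formula $\Phi$ contains, for every clause $C$ of every $\phi_i$, the clause $C\vee\neg\sigma_i$. Then $\Phi$ uses only $n+s=n+\Oh(\log t)$ variables, an assignment setting $z$ to encode $i$ satisfies $\Phi$ iff its $x$-part satisfies $\phi_i$, and so $\Phi$ is satisfiable iff some $\phi_i$ is. This is precisely an OR-composition whose output parameter is polynomially bounded in $\max_i|\phi_i|+\log t$, so the framework concludes that a polynomial kernel for $(\textsc{CNF-SAT},n)$ would force \containment.

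The main obstacle---and the part that is genuinely cited rather than reproved---is the engine of \cite{FortnowS11} behind this conclusion: no \NP-complete language admits an OR-distillation unless \containment. I would sketch it thus. Composing $t$ instances and then applying a hypothetical polynomial kernel yields a polynomial-time map $D$ sending any $t$-tuple of instances of size $\le n$ to a single instance of size $m=\mathrm{poly}(n,\log t)$ that is a \textsc{yes}-instance iff some input is. Taking $t$ slightly larger than $m$, an averaging argument over all $t$-tuples of \textsc{no}-instances (of which there are $N^t$, with $N\le 2^{n+1}$) shows that the most frequent output value is produced by preimage tuples collectively touching at least half of the remaining \textsc{no}-instances; iterating $\Oh(n)$ times builds a polynomial-size advice set $Y$ of \textsc{no}-outputs that covers every \textsc{no}-instance. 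This certifies $\overline{L}\in\NP/\mathrm{poly}$: to verify $x\notin L$, guess the remaining $t-1$ coordinates, compute $D$, and accept iff the output lies in $Y$, with soundness from the OR-property and completeness from the covering. Hence $\coNP\subseteq\NP/\mathrm{poly}$, equivalently \containment.

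Combining the explicit composition of the second paragraph with the distillation bound of the third gives that $(\textsc{CNF-SAT},n)$ has no polynomial kernel unless \containment, while the brute-force algorithm of the first paragraph gives fixed-parameter tractability. The delicate bookkeeping points are keeping $\Phi$ in CNF (which the single-clause guards $\neg\sigma_i$ ensure) and verifying that padding to a power of two leaves the answer unchanged; in the distillation step one must also fix the distinguished coordinate so that the covering set can be used with $x$ in the first slot, which is routine once the counting is in place.
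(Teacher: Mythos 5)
The paper does not prove this proposition at all---it is invoked as a black-box citation to Fortnow and Santhanam \cite{FortnowS11}---so there is no in-paper proof to diverge from, and your reconstruction is the standard argument from that reference: brute force over $2^n$ assignments for membership in \FPT, the selector-variable OR-composition into \textsc{CNF-SAT} parameterized by the number of variables, and the averaging/covering argument showing that an OR-distillation of an \NP-hard language yields \containment. The sketch is correct, including the two bookkeeping points you flag (keeping the composed formula in CNF and handling the coordinate of $x$ in the covering set).
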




\section{\FPT algorithm for {\sc Vertex Partization}}

In this section, we show that \rlpartization{} on perfect graphs is \FPT, using the iterative compression technique. 
Let $(G,r,\ell,k)$ be an input instance of \rlpartization{} on perfect graphs, and let $V(G) =\{v_1,\ldots,v_n\}$. We define, for every $1\leq i \leq n$, the vertex set
$V_i=\{v_1,\ldots,v_i\}$. 
Let $G_i$ denote  $G[V_i]$. Let $i_0= r+\ell+k+1$. We iterate through the instances $(G_i,r,\ell,k)$ starting from $i = i_0$. Given the $i^{th}$ instance and a known \rldeletion{} $S'_i$ of size at most
$k+1$, our objective is to obtain an \rldeletion{} $S_i$ of size at most $k$. The formal definition of this compression problem is as follows.

 \defparproblemoutput{{\sc \rlpartization{} Compression}}
 {A perfect graph $G$, non-negative integers $r,\ell,k$ and a $k+1$-sized vertex subset $S'$ such that $G-S'$ is an 
 \rlgraph, along with an \ICpartition{} $(Q_1,Q_2)$ of $G-S'$.}{$r,\ell,k$}{A vertex subset $S \subseteq V(G)$  
 of size at most $k$ such that $G- S$ is an \rlgraph, and an \ICpartition{} $(P_1,P_2)$ of $G-S$.} \vspace{10 pt}
 
Before we solve {\sc \rlpartization{} Compression}, we explain how to reduce the \rlpartization{} problem 
to $n-(r+\ell+k+1)+1$ instances of the {\sc \rlpartization{} Compression} problem on $G_i$, from $i =i_0$ to $i=n$. 
For the graph $G_{i_0}$, the set $V_{k+1}$ is a \rldeletion{}, $S_i'$, of size $k+1$. The graph $G_{i_0}-V_{k+1}$ has 
$r+\ell$ vertices. We set $Q_1^{i_0}$ to be a set of any $r$ vertices of $V_i-V_{k+1}$ and $Q_2^{i_0}$ to be 
the remaining set of $\ell$ vertices; that is,  $Q_2^{i_0}=V_{i_0}-V_{k+1}-Q_1^{i_0}$. Now, let $I_i = (G_i,r,\ell,k,S'_i,(Q_1^i,Q_2^i))$ be 
the $i^{th}$ instance of {\sc \rlpartization{} Compression}. 
If $S_{i-1}$ is a $k$-sized solution for $I_{i-1}$, then $S_{i-1} \cup \{v_i\}$ is a $(k+1)$-sized \rldeletion{} 
for $G_i$. 
So, the iteration begins with the instance $I_{i_0}= (G_{i_0},r,\ell,k, V_{k+1},(Q_1^{i_0},Q_2^{i_0}))$ and we try 
to obtain an \rldeletion{} of size at most $k$. If such a solution $S_{i_0}$ exists, we set $S'_{i_0 +1} = S_{i_0} \cup
\{v_{i_0+1}\}$ and ask for a $k$-sized solution for the instance $I_{i_0+1}$. We continue in this manner. If, during 
any iteration, the corresponding instance does not have an \rldeletion{} of size at most $k$, then this implies that 
the original instance $(G,r,\ell,k)$ is a \NO{} instance for \rlpartization. 
If $S_n$ is a $k$-sized \rldeletion{} for $G_n$, where $G_n=G$, then clearly 
$(G,r,\ell,k)$ is \YES{} instance of \rlpartization. 
Since there are at most $n-(r+\ell+k+1)+1$ iterations, the total time taken by the algorithm to solve 
\rlpartization{} is at most $n-(r+\ell+k+1)+1$ times the time taken to solve {\sc \rlpartization{} Compression}. 
Thus, if {\sc \rlpartization{} Compression} is \FPT, it follows that \rlpartization{} is \FPT.

We can also {\em view} the input graph $G$ of an instance of {\sc \rlpartization{} Compression} as an $(r+k+1,\ell)$-graph 
with \ICpartition{} $(Q_1\cup S',Q_2)$. Equivalently, {\sc \rlpartization{} Compression} has an input 
positive integers 
$r,\ell,k$ and a graph $G$ that is an $(r+k+1,\ell)$-graph and the objective is to decides whether 
there is a $k$-sized set $S \subseteq V(G)$ such that $G-S$ is an \rlgraph. 
This view point allows us to design an \FPT{} algorithm for {\sc \rlpartization{} Compression}. 
%
%
%
Towards that we first define some notations.
Let $G$ be a graph and $V(G)=\{v_1,\ldots,v_n\}$. 
For partition $P=(A,B)$ of $V(G)$ we define an $n$-length bit vector $B_P^G$ corresponding to 
$P$ as follows. We set the $i^{th}$ bit 0 if $v_i\in A$ and $1$ otherwise. 
For two $n$-length bit vectors $a=a_1\ldots a_n$ and $b=b_1\ldots b_n$, the hamming distance between $a$ and $b$, 
denoted by ${\mathcal H}(a,b)$, is the number of indices on which $a$ and $b$ mismatches. 
That is ${\mathcal H}(a,b)=\vert\{(a_i,b_i)~\vert a_i\neq b_i, i\in [n]~\} \vert$. 
%
%
%
The Hamming distance for the bitvectors corresponding to two \ICpartition{s}, of a graph, 
is bounded as given by the following proposition.

\begin{proposition}[\cite{HeggernesKLRS13}] \label{prop:hamming_distance_bound}
 Let $G$ be a graph.
Let $Q$ be an \ICpartition{} of $G$ that realizes that $G$ is an $(r',\ell')$-graph and  
$P$ is an \ICpartition{} of $G$ that realizes that $G$ is an $(r,\ell)$-graph.  
 Then $\mathcal{H}(B_Q^G,B_P^G) \leq r'\ell +r\ell'$.
\end{proposition}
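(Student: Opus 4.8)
The plan is to unpack the definition of the Hamming distance in terms of how the two \ICpartition{s} classify each vertex, and then bound the number of disagreements separately using the structural constraints that each partition imposes.

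First I would write $Q=(Q_1,Q_2)$ and $P=(P_1,P_2)$, where in each pair the first block is the independent-set side (partitioned into independent sets) and the second block is the clique side (partitioned into cliques). By the definition of the bit vectors, the $i$-th bits of $B_Q^G$ and $B_P^G$ disagree exactly when $v_i$ is placed on the independent side by one partition and on the clique side by the other. Hence
\[
\mathcal{H}(B_Q^G,B_P^G) = |Q_1\cap P_2| + |Q_2\cap P_1|,
\]
so it suffices to bound these two ``cross'' intersections by $r'\ell$ and $r\ell'$ respectively; the two ``agreeing'' terms $Q_1\cap P_1$ and $Q_2\cap P_2$ contribute nothing.

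The key step is a short counting argument applied to each cross term. Consider $Q_1\cap P_2$. Since $Q_1$ is partitioned into $r'$ independent sets $I_1,\dots,I_{r'}$, each $I_j\cap P_2$ is independent, so $G[Q_1\cap P_2]$ is partitioned into $r'$ independent sets. On the other hand, $P_2$ is a union of $\ell$ cliques, and an independent set contains at most one vertex of each clique, so $\alpha(G[P_2])\le \ell$ and therefore $\alpha(G[Q_1\cap P_2])\le \ell$. Combining these, $|Q_1\cap P_2|=\sum_j |I_j\cap P_2|\le r'\cdot\alpha(G[Q_1\cap P_2])\le r'\ell$. The symmetric argument handles $Q_2\cap P_1$: it lies in $Q_2$, a union of $\ell'$ cliques, so $G[Q_2\cap P_1]$ is covered by $\ell'$ cliques; and it lies in $P_1$, a union of $r$ independent sets, so each such clique has size at most $r$. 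This yields $|Q_2\cap P_1|\le \ell' r$. Adding the two bounds gives the claim.

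The step I expect to require the most care is the counting observation itself: one must combine an $r'$-coloring coming from the $Q$-side with an independence-number bound of $\ell$ coming from the $P$-side, rather than hoping for a single clique-versus-independent-set inequality. The clean way to see it is exactly as above — partition the intersection into $r'$ independent sets via the ambient $Q$-structure, and bound the size of each of those independent sets by the ambient $P$-clique-cover, which forces $\alpha\le\ell$. Everything else is bookkeeping over the four products $Q_i\cap P_j$, only two of which enter the Hamming distance.
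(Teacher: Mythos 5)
Your proof is correct, and since the paper simply cites this proposition from \cite{HeggernesKLRS13} without reproducing a proof, there is nothing in the text for it to diverge from; your argument is the standard one. The decomposition $\mathcal{H}(B_Q^G,B_P^G)=|Q_1\cap P_2|+|Q_2\cap P_1|$ and the two counting bounds (an independent set meets each of the $\ell$ cliques of $P_2$ at most once, a clique meets each of the $r$ independent sets of $P_1$ at most once) are exactly what is needed, and the bookkeeping is sound.
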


The following lemma follows from Proposition~\ref{prop:hamming_distance_bound}. 

\begin{lemma}\label{lem:essential_soln}
Let $G$ be a perfect graph and $(Q_1,Q_2)$ be an \ICpartition{} 
that realizes that $G$ is an $(r',\ell')$-graph. 
Let $S$ be a \rldeletion{} for $G$ such that $P$ is \ICpartition{} of $G$ 
that realizes that $G-S$ is an \rlgraph\ and let $Q=(Q_1\setminus S,Q_2\setminus S)$. 
%
Then $\mathcal{H}(B^{G-S}_Q,B^{G-S}_P) \leq r'\ell + r\ell'$
\end{lemma}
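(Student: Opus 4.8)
The plan is to reduce the claim to Proposition~\ref{prop:hamming_distance_bound}, but applied to the graph $G-S$ rather than to $G$ itself. The two \ICpartition{s} whose bit vectors we wish to compare, namely $Q=(Q_1\setminus S,Q_2\setminus S)$ and $P$, both live on the vertex set $V(G)\setminus S$ of $G-S$. Hence the whole statement follows the instant we verify that each of $Q$ and $P$ is a genuine \ICpartition{} of $G-S$ with the correct parameters: $P$ realizing that $G-S$ is an \rlgraph, and $Q$ realizing that $G-S$ is an $(r',\ell')$-graph. The first of these is immediate from the hypothesis, since $P$ is by assumption an \ICpartition{} that realizes $G-S$ as an \rlgraph.

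The key step is to show that $Q$ realizes $G-S$ as an $(r',\ell')$-graph, and for this I would use that passing to an induced subgraph preserves both independence and cliquehood. Since $(Q_1,Q_2)$ realizes $G$ as an $(r',\ell')$-graph, $G[Q_1]$ partitions into $r'$ independent sets $I_1,\ldots,I_{r'}$ and $G[Q_2]$ partitions into $\ell'$ cliques $C_1,\ldots,C_{\ell'}$. Intersecting each class with $V(G)\setminus S$, the set $I_j\setminus S$ is still independent in $(G-S)[Q_1\setminus S]=G[Q_1\setminus S]$, and $C_j\setminus S$ is still a clique in $(G-S)[Q_2\setminus S]=G[Q_2\setminus S]$. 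Allowing empty classes, this exhibits a partition of $Q_1\setminus S$ into at most $r'$ independent sets and of $Q_2\setminus S$ into at most $\ell'$ cliques, so $Q$ is indeed an \ICpartition{} of $G-S$ witnessing that it is an $(r',\ell')$-graph.

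With both $Q$ and $P$ certified as \ICpartition{s} of $G-S$ realizing $(r',\ell')$ and $(r,\ell)$ respectively, I would simply invoke Proposition~\ref{prop:hamming_distance_bound} with $G-S$ in place of $G$ to conclude $\mathcal{H}(B^{G-S}_Q,B^{G-S}_P)\leq r'\ell+r\ell'$. I do not expect a genuine obstacle here: Proposition~\ref{prop:hamming_distance_bound} is stated for arbitrary graphs, so the perfectness of $G$ is not even needed for this argument, and the only point requiring care is the bookkeeping of the previous paragraph, namely that restricting a partition to a vertex subset cannot increase the number of independent-set or clique classes. Thus the entire lemma is essentially a corollary of the proposition, once the restriction of $(Q_1,Q_2)$ to $G-S$ is seen to keep the same parameters.
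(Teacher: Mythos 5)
Your proof is correct and follows exactly the route the paper intends: the paper gives no written proof, stating only that the lemma ``follows from Proposition~\ref{prop:hamming_distance_bound},'' and your argument is precisely the elaboration of that claim --- restricting $(Q_1,Q_2)$ to $V(G)\setminus S$ preserves the $(r',\ell')$ witness, so the proposition applied to $G-S$ gives the bound. Your side remark that perfectness is not needed here is also accurate.
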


Lemma~\ref{lem:essential_soln} implies that to solve {\sc \rlpartization{} Compression} 
it is enough to solve the following problem.  

 \defparproblemoutput{{\sc Short \rlpartization}}{A perfect graph $G$, 
positive integers $r,\ell,k, \rho$, and 
a partition $Q=(Q_1,Q_2)$ of $V(G)$.}{$r,\ell,k,\rho$}{A vertex subset $S \subseteq V(G)$  of size at most $k$ such 
that $G- S$ is a \rlgraph, and an \ICpartition{} $(P_1,P_2)$ of $G-S$ such that 
${\mathcal H}(B_P^{G-S}, B_{Q'}^{G-S})\leq \rho$ where $Q'=(Q_1\setminus S, Q_2\setminus S)$.} \vspace{10 pt}



\begin{lemma}
{\sc Short \rlpartization} is \FPT. 
\end{lemma}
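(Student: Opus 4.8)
The plan is to solve {\sc Short \rlpartization} by a bounded-depth branching algorithm whose measure is $\mu = k+\rho$. Throughout the recursion I would maintain a current bipartition $(P_1,P_2)$ of the not-yet-deleted vertices (initialized to $(Q_1,Q_2)$), a set $S$ of already-deleted vertices, and a set of \emph{frozen} vertices whose membership has been irrevocably decided, together with the remaining budgets $k-|S|$ and $\rho$ minus the number of \emph{flips} made so far, where a flip is a vertex placed on the side opposite to its $Q$-membership. The key book-keeping invariant is that I only ever move \emph{unfrozen} vertices, and since a vertex is frozen the moment it is moved or deleted, every unfrozen vertex still sits on its original $Q$-side; hence each move is a genuine, not-previously-counted flip and decreases the flip budget by exactly one.

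At each node I would first test whether the current partition is already an \ICpartition, using Proposition~\ref{prop:polytime_algorithms}: part~(a) applied to $G[P_1]$ with the integer $r$ either certifies that $G[P_1]$ splits into at most $r$ independent sets or returns a clique $C$ on $r+1$ vertices, and part~(b) applied to $G[P_2]$ with the integer $\ell$ either certifies that $G[P_2]$ splits into at most $\ell$ cliques or returns an independent set $I$ on $\ell+1$ vertices. These are exactly the conditions for $(P_1,P_2)$ to be an \ICpartition, because every induced subgraph of a perfect graph is perfect and, by Lemma~\ref{pg:complement}, so is its complement. If both tests succeed and both budgets are non-negative, I output $S$ and $(P_1,P_2)$; the recorded number of flips equals $\mathcal{H}(B_P^{G-S},B_{Q'}^{G-S})$ with $Q'=(Q_1\setminus S,Q_2\setminus S)$, so all constraints of {\sc Short \rlpartization} hold.

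If instead a clique $C$ with $|C|=r+1$ is returned, then in any solution the independent-set side $P_1^{*}$ induces an $r$-colourable (hence $(r+1)$-clique-free) graph, so $C\not\subseteq P_1^{*}$; thus some $v\in C$ is, in the solution, either deleted or placed in $P_2$. I would branch over the unfrozen vertices $v\in C$, and for each such $v$ over the two options ``delete $v$'' (add $v$ to $S$, freeze it) and ``move $v$ to $P_2$'' (freeze it); each option reduces $\mu$ by exactly one. Symmetrically, if an independent set $I$ with $|I|=\ell+1$ is returned I branch over its unfrozen vertices with the options ``delete'' and ``move to $P_1$''. A branch is declared infeasible if an obstruction contains only frozen vertices or a budget would turn negative.

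For correctness I would argue that a solution-consistent execution always survives: given a solution $(S^{*},(P_1^{*},P_2^{*}))$, follow the branch that places each frozen vertex as the solution does. Then every frozen vertex currently in $P_1$ lies in $P_1^{*}$ (frozen vertices reach $P_1$ only by a move dictated by the solution), so the vertex of a clique obstruction $C$ that the solution keeps out of $P_1^{*}$ must be unfrozen and is one of the branched-on options; the symmetric statement holds for independent-set obstructions. Hence the correct branch is always available, and when no obstruction remains the output is valid. For the running time, the branching factor is at most $2(\max\{r,\ell\}+1)$ and the depth is at most $\mu=k+\rho$, while each node performs only the polynomial-time computations of Proposition~\ref{prop:polytime_algorithms}; the total running time is therefore $\big(2(\max\{r,\ell\}+1)\big)^{k+\rho}\, n^{\Oh(1)}$, which is \FPT{} in $r+\ell+k+\rho$. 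I expect the main obstacle to be exactly this measure accounting: it is what forces the freezing discipline together with the observation that unmoved vertices stay on their $Q$-side, guaranteeing that $\mu$ strictly drops at every branch and that the flip count coincides with the required Hamming distance.
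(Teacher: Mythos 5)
Your proposal is correct and follows essentially the same route as the paper: branch on an $(r+1)$-clique in the independent-set side or an $(\ell+1)$-independent set in the clique side (found via Proposition~\ref{prop:polytime_algorithms}), with the options ``delete'' or ``move across,'' decreasing the measure $k+\rho$ at every branch, giving the same $\bigl(2(\max\{r,\ell\}+1)\bigr)^{k+\rho}\,n^{\Oh(1)}$ bound. Your extra ``freezing'' bookkeeping is a harmless refinement; the paper instead lets any vertex of an obstruction be moved and charges $\rho$ via a triangle-inequality argument on the Hamming distance, which achieves the same accounting without tracking frozen vertices.
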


\begin{proof}
We design a recursive algorithm ${\cal A}$ for {\sc Short \rlpartization} which takes a tuple 
$(G,r,\ell,k,\rho, Q=(Q_1,Q_2))$ as input, 
where $G$ is a graph, $Q$ is a partition of $V(G)$ and $r,\ell,k,\rho$ are integers. It outputs a $k$-sized 
\rldeletion{} $S$ of $G$ and an \ICpartition{} $P$ that realizes that $G-S$ is \rlgraph{} such that 
${\mathcal H}(B_P^{G-S},B_{Q'}^{G-S})\leq \rho$, where $Q'=(Q_1\setminus S,Q_2\setminus S)$, 
if such a tuple $(S,P)$ exists, otherwise returns \NO. 
The following are the steps of the recursive algorithm ${\cal A}$ on input 
$(G,r,\ell,k,Q=(Q_1,Q_2),\rho)$. 
\begin{enumerate}
\item[1.] If $k < 0$ or $\rho < 0$ then output \NO.  
\item[2.] If $G[Q_1]$ is $r$-colorable and $\overline{G}[Q_2]$ is $\ell$-colorable, 
then return $(\emptyset, Q)$.
\item[3.] If $G[Q_1]$ is not $r$-colorable, then there is an $r+1$-sized clique in $G_1$. 
By Proposition~\ref{prop:polytime_algorithms}, we can find such a $r+1$-sized clique $C$ 
in polynomial time. Make the following recursive calls to ${\cal A}$:
\begin{itemize}
 \item[(a)] For every vertex $v \in V(C)$, 
do a recursive call ${\cal A}(G-v,r,\ell,k-1 ,\rho,(Q_1\setminus \{v\}, Q_2\setminus \{v\}))$ and if the recursive call returns 
$(S',P)$ then return $(S'\cup \{v\},P)$ as output. 
 \item[(b)] For every vertex $v \in C$, 
do a recursive call ${\cal A}(G,r,\ell,k,\rho-1,(Q_1\setminus \{v\}, Q_2\cup \{v\}))$ and if the recursive call 
returns $(S',P)$ then return $(S',P)$ as output.
\end{itemize}
If all the recursive calls above (in Step $3$) return \NO, then return \NO.
\item[4] If $\overline{G}[Q_2]$ is not $\ell$-colorable, then there is clique of 
size $\ell+1$ in $\overline{G}[Q_2]$. Thus, using Proposition~\ref{prop:polytime_algorithms}, 
we can find an $\ell+1$-sized independent set $I$ in ${G}[Q_2]$. Make the following recursive calls of the algorithm:
\begin{itemize}
\item[(a)] For every vertex $v \in I$, 
do a recursive call ${\cal A}(G-v,r,\ell,k-1,\rho,(Q_1\setminus \{v\}, Q_2\setminus \{v\}))$ and if the recursive 
call returns 
$(S',P)$ then return $(S'\cup \{v\},P)$ as output. 
 \item[(b)] For every vertex $v \in I$, 
do a recursive call ${\cal A}(G,r,\ell,k,\rho-1,(Q_1\cup\{v\}, Q_2\setminus \{v\}))$ and if the recursive call 
returns $(S',P)$ then return $(S',P)$ as output.
\end{itemize}
 If all the recursive calls above (in Step $4$) return \NO, then return \NO.
\end{enumerate}

\noindent
\textbf{Correctness.} 
We now prove that 
the recursive algorithm 
${\cal A}$ is correct. 
We show that if $(G,r,\ell,k,\rho,(Q_1,Q_2))$ is an \YES{} 
instance of  {\sc Short \rlpartization}, 
then the algorithm ${\cal A}$ will output correct solution. We prove this  
by induction on $k+\rho$.
 In the reverse directed we need to show that  if the algorithm ${\cal A}$ returns \NO\ then indeed the given instance is a \NO\ instance. The proof of reverse direction is similar to the proof of forward direction and thus omitted.  

\noindent
\textbf{Base case: $k=0$ and $\rho=0$.}, 
Since $(G,r,\ell,k,\rho,(Q_1,Q_2))$ is an \YES{} instance 
$(Q_1,Q_2)$ is an \ICpartition{} which realizes that 
$G$ is an \rlgraph. In Step 2 of the algorithm ${\cal A}$
we output $(\emptyset,(Q_1,Q_2))$ as the output.

\noindent
\textbf{Induction.} By induction hypothesis we assume 
that ${\cal A}$ outputs correct answer when $k+\rho<\gamma$, where 
$\gamma \geq 0$. 
Now we need to show that ${\cal A}$ outputs correct answer when $k+\rho= \gamma$.   
Let $(G,r,\ell,k,\rho,Q=(Q_1,Q_2))$ be the input of ${\cal A}$ such that $k+\rho=\gamma$. 
Let $(S,(P_1,P_2)$ be a solution of {\sc Short \rlpartization}. 
If $S=\emptyset$ and $(P_1,P_2)=(Q_1,Q_2)$ then in Step 
2, algorithm ${\cal A}$ will output $(\emptyset,Q)$. 
Otherwise either $G[Q_1]$ is not $r$-colorable or $\overline{G}[Q_2]$ is not $\ell$-colorable.

\smallskip
\noindent 
\textbf{Case 1:}
Suppose $G[Q_1]$ is not $r$-colorable. 
Then there is a clique $C$ of size $r+1$ in $G[Q_1]$. 
In this case at least one vertex $v$ from $C$ either belongs to $S$ 
or not belongs to $P_1$. If $v\in S$, then consider the recursive call 
${\cal A}(G-v,r,\ell,k-1,\rho, (Q_1\setminus \{v\},Q_2\setminus \{v\})$. 
By induction hypothesis ${\cal A}(G-v,r,\ell,k-1,\rho, (Q_1\setminus \{v\},Q_2\setminus \{v\})$ 
will return $(S',P')$ such that $S'$ is a $k-1$ sized \rldeletion{} of $G-\{v\}$ such that 
${\mathcal H}(B_{P'}^{G-S'},B_{Q'}^{G-S'})\leq \rho$, where 
$Q'=(Q_1\setminus (S'\cup \{v\}), Q_2\setminus (S'\cup \{v\}))$. 
Hence in Step $3(a)$, algorithm ${\cal A}$ will output $(S'\cup \{v\},P)$  and 
this is a solution for {\sc Short \rlpartization} on input 
$(G,r,\ell,k,\rho,(Q_1,Q_2))$. 

If $v\notin S$, then $v\notin P_1$ as well. Now 
consider the recursive call 
${\cal A}(G,r,\ell,k,\rho-1, (Q_1\setminus \{v\},Q_2\cup \{v\})$. 
By induction hypothesis ${\cal A}(G,r,\ell,k,\rho-1, (Q_1\setminus \{v\},Q_2\cup \{v\})$ 
will return $(S'',P'')$ such that $S''$ is a $k$ sized \rldeletion{} of $G$ such that 
${\mathcal H}(B_{P''}^{G-S''},B_{Q'}^{G-S''})\leq \rho-1$, where 
$Q'=((Q_1\setminus \{v\})\setminus S'', (Q_2\cup \{v\})\setminus S'')$. 
Hence in Step $3(a)$, algorithm ${\cal A}$ will output $(S'',P'')$. 
Clearly $S''$ is a $k$ sized \rldeletion{} of $G''$. Now we show that 
${\mathcal H}(B_{P''}^{G-S''},B_{Q''}^{G-S''})\leq \rho$, where 
$Q''=(Q_1\setminus S'', Q_2\setminus S'')$. Note that 
${\mathcal H}(B_{Q'}^{G-S''}, B_{Q''}^{G-S''})\leq 1$. 
Since ${\mathcal H}(B_{P''}^{G-S''},B_{Q'}^{G-S''})\leq \rho-1$ and 
${\mathcal H}(B_{Q'}^{G-S''}, B_{Q''}^{G-S''})\leq 1$, we have that \\
${\mathcal H}(B_{P''}^{G-S''},B_{Q''}^{G-S''})\leq \rho$

\smallskip
\noindent 
{\bf Case 2 : $\overline{G}[Q_2]$ is not $\ell$-colorable. } 
This case is symmetric to Case 1.

\noindent
\textbf{Running Time.} 
Note that when $k<0$ or $\rho<0$, then the algorithm will stop in a single step. 
Each recursive call either decreases $k$ by $1$ or $\rho$ by $1$. Hence the 
depth of the recursion tree is bounded by $k+\rho+1$. 
Note that in Step $3$ we makes at most $2(r+1)$ recursive calls and in Step 4 
we make at most $2(\ell+1)$ recursive calls. Hence the total 
running time of the algorithm ${\cal A}$ is bounded by $\Oh(\max\{(2(r+1))^{k+\rho+1}n ^{\Oh(1)},(2(\ell+1))^{k+\rho+1}n^{\Oh(1)}\})$. 
\qed
\end{proof}

{\sc \rlpartization{} Compression} is a special case of {\sc Short \rlpartization} 
when $\rho=(r+k+1)\ell+r\ell$. Therefore, we have the following theorem.

\begin{theorem}
\rlpartization{} on perfect graphs is  \FPT.
\end{theorem}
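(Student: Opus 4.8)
The plan is to assemble the machinery developed above into the iterative compression scheme. First I would make the reduction from \rlpartization{} to {\sc \rlpartization{} Compression} explicit: processing the vertices $v_1,\dots,v_n$ in order, at the $i$-th step I maintain an \rldeletion{} of $G_i$ of size at most $k+1$ together with an \ICpartition{} of the remainder, and call the compression routine to shrink it to size $k$. As noted in the text, from a size-$k$ solution $S_{i-1}$ of $G_{i-1}$ the set $S_{i-1}\cup\{v_i\}$ is a valid size-$(k+1)$ seed for $G_i$, so the invariant propagates; the iteration starts from the trivial seed $V_{k+1}$ on $G_{i_0}$, where $i_0=r+\ell+k+1$ and $G_{i_0}-V_{k+1}$ has exactly $r+\ell$ vertices that split trivially into the seed \ICpartition{} $(Q_1^{i_0},Q_2^{i_0})$. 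The whole problem is thus solved by at most $n-(r+\ell+k+1)+1$ compression calls; if some call reports no size-$k$ \rldeletion{}, the original instance is a \NO{} instance, and otherwise the last call returns the desired $S_n$.

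Next I would argue that one compression call is \FPT. The key move, following the text, is to re-read the compression input not as ``delete $S'$ from $G$'' but as the statement that $G$ itself is an $(r+k+1,\ell)$-graph with \ICpartition{} $(Q_1\cup S',Q_2)$. Applying Lemma~\ref{lem:essential_soln} with $r'=r+k+1$ and $\ell'=\ell$ to any hypothetical size-$k$ solution whose \ICpartition{} realizes $G-S$ as an \rlgraph{}, the Hamming distance between the two associated bit vectors is at most $r'\ell+r\ell'=(r+k+1)\ell+r\ell$. This is exactly the content of the remark that {\sc \rlpartization{} Compression} is the special case of {\sc Short \rlpartization} with $\rho=(r+k+1)\ell+r\ell$: it suffices to look for a target \ICpartition{} differing from the known one in at most $\rho$ coordinates, which is precisely what the recursive algorithm ${\cal A}$ searches for.

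It then remains to invoke the preceding lemma that {\sc Short \rlpartization} is \FPT{}. Since induced subgraphs of a perfect graph are perfect and, by Lemma~\ref{pg:complement}, complements of perfect graphs are perfect, every recursive call of ${\cal A}$ operates on perfect graphs, so the applications of Proposition~\ref{prop:polytime_algorithms} inside ${\cal A}$ are legitimate on each compression instance. Substituting $\rho=(r+k+1)\ell+r\ell$ into the stated running time of ${\cal A}$ yields a bound of the form $f(k,r,\ell)\cdot n^{\Oh(1)}$, because $k+\rho$ is a function of $k,r,\ell$ alone; multiplying by the $\Oh(n)$ compression calls preserves this form, establishing that \rlpartization{} on perfect graphs is \FPT{}.

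Finally, I would flag where the real content sits. The load-bearing step is Lemma~\ref{lem:essential_soln} (a consequence of Proposition~\ref{prop:hamming_distance_bound}): it is what replaces an a priori unbounded search over all target \ICpartition{s} by a search inside a Hamming ball of radius $\rho=(r+k+1)\ell+r\ell$, and hence bounds the recursion depth of ${\cal A}$ by a function of the parameter. Everything else is bookkeeping --- verifying the seed \ICpartition{} on $G_{i_0}$, getting the roles of $r'$ and $\ell'$ right (it is the $(r+k+1,\ell)$-view of $G$, not the target $(r,\ell)$-view, that supplies them), and checking that perfectness survives along every branch so that ${\cal A}$'s polynomial-time subroutines remain available.
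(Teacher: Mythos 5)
Your proposal is correct and follows exactly the paper's own route: iterative compression over $G_{i_0},\dots,G_n$, re-reading each compression instance as an $(r+k+1,\ell)$-graph so that Lemma~\ref{lem:essential_soln} bounds the Hamming radius by $\rho=(r+k+1)\ell+r\ell$, and then invoking the \FPT{} algorithm for {\sc Short \rlpartization}. The only addition is your explicit remark that perfectness is preserved under induced subgraphs and complementation, which is a correct (and implicitly assumed) justification for the subroutines of ${\cal A}$.
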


\section{Kernel lower bound}

In this section we show that \rlpartization{} on perfect graphs does not have polynomial kernels. In fact, we show that {\sc Partization Recognition} on perfect graphs cannot have a polynomial kernel, when parameterized by $r+\ell$, unless \containment. 
%
%
%
%
{\sc Partization Recognition} 
on perfect graphs 
when parameterized by $r+\ell$ was shown to be \FPT{} in \cite{HeggernesKLRS13}. Below, we show that the problem cannot have any polynomial  sized kernel. 

\begin{theorem}\label{lem:no_kernel}
{\sc Partization Recognition} 
on perfect graphs 
when parameterized by $r+\ell$, 
does not have a polynomial kernel unless \containment.
\end{theorem}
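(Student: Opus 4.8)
The plan is to apply Proposition~\ref{prop:ppt_redn} with the source problem $\Pi = ${\sc CNF-SAT} parameterized by the number of variables, which by Proposition~\ref{prop:CNF_no_poly_kernel} admits no polynomial kernel unless \containment, and the target $\Gamma = ${\sc Partization Recognition} on perfect graphs parameterized by $r+\ell$. Two hypotheses are immediate: {\sc CNF-SAT} is \NP-hard, and $\Gamma\in\NP$ since a witnessing \ICpartition\ of $V(G)$ into $r$ independent sets and $\ell$ cliques is verifiable in polynomial time. The whole burden is therefore to build, from a formula $\phi$ with $p$ variables and $m$ clauses, a \emph{perfect} graph $G_\phi$ and integers $r,\ell$ with $r+\ell \le p^{\Oh(1)}$ such that $G_\phi$ is an \rlgraph\ if and only if $\phi$ is satisfiable, in polynomial time. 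I would fix $\ell=1$ and $r=p$, so the new parameter is $r+\ell=p+1$, and the question becomes whether there is a single clique $K$ with $\omega(G_\phi-K)\le r$ (equivalently, by perfection, $G_\phi-K$ is $r$-colourable).

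For the construction I would use one global \emph{selector} gadget and one gadget per clause. The selector is the complete multipartite graph on the pairs $\{t_i,f_i\}$, $i\in[p]$, where $t_i$ is non-adjacent only to $f_i$; its maximum cliques have size exactly $p$ and are precisely the sets choosing one literal vertex per variable, so they correspond bijectively to truth assignments. For each clause $C_j$ I attach a clique $H_j$ of size $r+1-|C_j|$, made completely adjacent to the set $L_j$ of literal vertices occurring in $C_j$ and to nothing else (after deleting tautological clauses so that each $L_j$ is itself a clique). The key numerical point is that every clause gadget enlarges $G_\phi$ by at most $\Oh(p)$ vertices yet leaves the parameter $r+\ell=p+1$ untouched: the bound $r+\ell\le p^{\Oh(1)}$ is independent of $m$, which is exactly what a ppt from the variable-parameterization of {\sc CNF-SAT} demands.

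The correctness I would argue as follows. One checks that the only $(r+1)$-cliques of $G_\phi$ are the sets $H_j\cup L_j$. Given a satisfying assignment, let $K$ be the clique of the $p$ true-literal vertices; then every $H_j\cup L_j$ loses a vertex, no $(r+1)$-clique survives, $\omega(G_\phi-K)\le r$, and $(G_\phi-K,K)$ is an $(r,1)$-partition. Conversely, a clique $K$ with $\omega(G_\phi-K)\le r$ must meet every $(r+1)$-clique $H_j\cup L_j$; since any clique containing a vertex of some $H_j$ is confined to $H_j\cup L_j$ and cannot reach gadgets of clauses sharing no literal with $C_j$, one argues that $K$ may be taken inside the selector, where it is a consistent choice of literals hitting every $L_j$, i.e.\ a satisfying assignment. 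Perfection of $G_\phi$ I would establish via Proposition~\ref{prop:perfect_graph_char}: the selector is complete multipartite (hence perfect), each $H_j$ is a clique, and one verifies that the complete join of $H_j$ to $L_j$ creates no odd hole or odd antihole.

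The step I expect to be the main obstacle is exactly this verification of perfection together with the reverse direction. Perfection is delicate because the clause gadgets are glued onto overlapping literal sets, so I would need to exclude induced odd holes and antiholes spanning a gadget and the selector---possibly by exhibiting $G_\phi$ as assembled from perfect graphs under perfection-preserving operations rather than checking Proposition~\ref{prop:perfect_graph_char} by hand. The reverse direction is the other sensitive point: I must ensure that no ``degenerate'' clique using gadget vertices can hit all the $(r+1)$-cliques without inducing a genuine satisfying assignment, which may require padding with dummy clauses so that the gadgets are sufficiently spread out. Once these are settled, Propositions~\ref{prop:CNF_no_poly_kernel} and~\ref{prop:ppt_redn} yield the claim.
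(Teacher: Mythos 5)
Your outer skeleton matches the paper's: a polynomial parameter transformation from {\sc CNF-SAT} parameterized by the number of variables (via Propositions~\ref{prop:CNF_no_poly_kernel} and~\ref{prop:ppt_redn}), with target parameters $\ell=1$ and $r=$ number of variables, and essentially the same ``selector'': a complete multipartite graph with one non-adjacent pair per variable, whose cliques encode consistent sets of literals. But your clause gadget is genuinely different, essentially dual to the paper's. The paper attaches to each clause $C$ two \emph{independent} vertices $w_C^1,w_C^2$ joined to the literal vertices \emph{not} occurring in $C$, and argues correctness on the independent-set side (each $w_C^i$ must share an independent set with a literal vertex of $C$, while the single clique of the partition collects the false literals). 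You instead attach a clique $H_j$ of size $r+1-|C_j|$ joined to the literals \emph{occurring} in $C_j$, so that the sets $H_j\cup L_j$ are the only $(r+1)$-cliques and the single clique of the partition (the true literals) must hit all of them. Both constructions have size polynomial in $n+m$ with parameter $r+\ell=n+1$, so either serves as a valid ppt.

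As written, though, your proposal stops exactly where the proof content lies, and you flag this yourself: perfection of $G_\phi$ and the converse direction are left unverified. Both can be closed, and more easily than you anticipate. For perfection: after discarding tautological clauses, every $h\in H_j$ is \emph{simplicial} (its neighbourhood $(H_j\setminus\{h\})\cup L_j$ is a clique), and a simplicial vertex lies on no induced cycle of length at least $4$ and in no odd antihole of length at least $5$ (in $\overline{C_k}$ with $k\ge 5$ every vertex has two non-adjacent neighbours); hence any odd hole or antihole would have to live entirely inside the selector, which is complete multipartite and therefore perfect, so Proposition~\ref{prop:perfect_graph_char} applies. No machinery of perfection-preserving operations is needed. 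For the converse, no padding with dummy clauses is needed either: if the clique $K$ of an $(r,1)$-partition contains a vertex of some $H_j$, then $K\subseteq H_j\cup L_j$, and for every other clause $j'$ the $(r+1)$-clique $H_{j'}\cup L_{j'}$ forces $K\cap L_j\cap L_{j'}\neq\emptyset$; so the literal vertices of $K$ form a consistent set of literals meeting every clause, exactly as in the case where $K$ avoids the gadgets. You should also handle the degenerate case of an empty clause explicitly (there $H_j\cup L_j=H_j$ can be hit without satisfying anything, so empty clauses must be rejected up front). With these two arguments supplied, your reduction is a correct, genuinely different alternative to the paper's.
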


\begin{proof}
We prove the lemma by  giving a polynomial parameter transformation 
 from {\sc CNF-SAT} parameterized by the number of variables. From Proposition~\ref{prop:CNF_no_poly_kernel}, we know that {\sc CNF-SAT} parametrized by the number of variables does not have a polynomial kernel unless \containment~\cite{FortnowS11}. Then the proof of the theorem follows 
from Proposition~\ref{prop:ppt_redn}. 
We start with an instance $(\phi,n)$ of {\sc CNF-SAT}, where $\phi$ is a CNF formula with $m$ clauses and $n$ variables. Without loss of generality we assume that there is no clause where both literals of a variable appear  together, since such a clause will be satisfied by any assignment and hence can be removed. The polynomial parameter transformation produces an instance $(G,n,1)$ of 
{\sc Partization Recognition}, where $G$ is a perfect graph, such that 
$(\phi,n)$ is an \YES{} instance of {\sc CNF-SAT} if and only if 
$(G,n,1)$ is an \YES{} instance of {\sc Partization Recognition}.  
Let ${\cal C}=\{C_1,\ldots,C_m\}$, $X=\{x_1,\ldots,x_n\}$  and $L=\{x_1,\bar{x_1},\ldots,x_n,\bar{x_n}\}$
be the set of clauses, variables and literals  of $\phi$ respectively. 
The construction the graph $G$ from the formula $\phi$ is as follows (illustrated in Figure~\ref{fig:no_kernel}):
\begin{enumerate}
 \item For each variable $x$, we create two vertices $v_x,v_{\bar{x}}$ which represent the literals $x,\bar{x}$. We call them the literal vertices. More specifically, we call $v_x$ the positive literal vertex and $v_{\bar{x}}$ the negative literal vertex.
 \item For each clause $C$, we create two vertices $w_C^1,w_C^2$. We call these the clause vertices corresponding to the clause $C$. 
 \item For each pair of variables $x,y$, we add the edges  $(v_x,v_y),(v_{\bar{x}},v_y),(v_{\bar{x}},v_y),$ and $(v_{\bar{x}},v_{\bar{y}})$. Notice that $(v_x,v_{\bar{x}})$ and $(v_y,v_{\bar{y}})$ are non-edges. 
 \item For each clause $C$ and each literal $q\in L$, if $q\notin C$,  we add edges 
$(x_q,w_C^1)$ and $(x_q,w_C^2)$. 
In other words if a literal $q'$ belongs to a clause $C$, then 
$(q',w_C^1),(q',w_C^2)\notin E(G)$ In other words there is a complete bipartite graph between $L-C$ and $\{w_{C}^1, w_{C}^2\}$. 
\end{enumerate}
In short, the vertex set and edge set of $G$ is defined as follows 
(note that for a literal $x$, if $x=\bar{y}$, then $y={\bar{x}}$).  
\begin{eqnarray*}
V(G)&=&\{v_x,v_{\bar{x}}~|~x\in X\}\cup \{w_C^1,w_C^2~|~C\in{\cal C}\}\\
E(G)&=&\{(v_{x},v_{y})~|~x\neq \bar{y}\}\cup \{(w_C^1,x),(w_C^2,v_x)~|x\notin C\}
\end{eqnarray*}
\begin{figure}[t] 
   \centering
 \begin{tikzpicture}[scale=0.8]
\node [draw,circle] (a) at (0,0) {};
\node at (0,0.35) {$v_{x_1}$};
\node [draw,circle] (b) at (1,0) {};
\node at (1,0.35) {$v_{\bar{x}_1}$};
\node [draw,circle] (c) at (3,0) {};
\node at (3, 0.35) {$v_{x_2}$};
\node [draw,circle] (d) at (4,0)  {};
\node at (4, 0.35) {$v_{\bar{x}_2}$};
\node [draw,circle] (e) at (0,-2) {};
\node at (0,-2.5) {$w_{C^1_1}$};
\node [draw,circle] (f) at (1,-2) {};
\node at (1,-2.5) {$w_{C^2_1}$};
\node [draw,circle] (g) at (3,-2) {};
\node at (3,-2.5) {$w_{C^1_2}$};
\node [draw,circle] (h) at (4,-2) {};
\node at (4,-2.5) {$w_{C^2_2}$};

\draw (a) ..controls (1.5,1).. (c);
\draw (a) ..controls (2,1.5).. (d);
\draw (b) -- (c);
\draw (b) ..controls (2.5,1).. (d);

\draw (b) -- (e);
\draw (b) -- (f);
\draw (c) -- (e);
\draw (c) -- (f);
\draw (a) -- (g);
\draw (a) -- (h);
\draw (c) -- (g);
\draw (g) -- (d);
\draw (c) -- (h);
\draw (h) -- (d);
\draw (h) -- (c);

     \end{tikzpicture}
  \caption{An illustration of the construction of the graph $G$ in Theorem~\ref{lem:no_kernel} for the  formula $\phi=(x_1 \vee \overline{x}_2) \wedge (\overline{x}_1)$. Here $C_1=(x_1 \vee \overline{x}_2)$ and $C_2=(\overline{x}_1)$.}
 \label{fig:no_kernel}
  \end{figure}
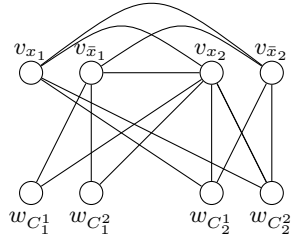
   
Let $V_X=\{v_{x},v_{\bar{x}}~|~x\in X\}$ and 
$V_{\cal C}=\{w_C^1,w_C^2~|~C\in{\cal C}\}$. 
Note that the set of vertices $V_{\cal C}$ corresponding to the clauses forms an independent set in $G$.
First we show that $G$ is a perfect graph. 

\begin{claim1}
\label{claim:no_antihole}
The graph $\overline{G}$ does not contain an induced odd cycle of length $\geq 5$. 
\end{claim1}
\begin{proof}
We first prove that there is no path of length $2$ (path on $3$ vertices) 
in $\overline{G[}V_X]$. Note that $E(\overline{G})=\{(v_x,v_{\bar{x}})~|~ x\in X\}$. This implies that the degree of each vertex in 
the graph $\overline{G}[V_X]$ is exactly equal to $1$. Hence. there is no path 
of length $2$ in $\overline{G}[V_X]$. Also, since $V_{\cal C}$ forms a clique in $\overline{G}$, any induced cycle 
of length at least $5$ in  $\overline{G}$ will either contain  a vertex or an edge from $V_{\cal C}$. 

Let $C'$ be an induced odd cycle of length at least $5$ in $\overline{G}$. There will be at most two vertices from $V_{\cal C}$ 
which are part of $C'$ and these vertices will appear consecutively 
in $C'$. This implies that $C'$ contains a path of length at least $2$ 
using only vertices from $V_X$ in $\overline{G}$, which is a contradiction. 
\qed
\end{proof}
\begin{claim1}
\label{claim:no_hole}
The graph $G$ does not contain an induced odd cycle of length $\geq 5$. 
\end{claim1}

\begin{proof}
We first show that any induced odd cycle of length at least $5$ 
can contain at most $3$ vertices  from $V_X$. Suppose not. 
Let $C'$ be an  induced odd cycle of length at least $5$ 
such that $\vert V(C')\cap V_X\vert \geq 4$. 
Let $v_w,v_x,v_y,v_z$ be four distinct vertices 
from $V(C')\cap V_X$ appearing in that order, if we go around the cycle in clockwise manner. 
That is,  there are paths 
$v_w-v_x$, $v_x-v_y$, $v_y-v_z$  in $C'$. Since 
$C'$ is an induced cycle, there is no edge $(v_w,v_y)$ in $E(G)$. 
This implies that $y=\bar{w}$. By similar arguments we can show that 
$x=\bar{z}$. This implies that $v_wv_xv_yv_zv_w$ form a cycle of length 
$4$ in $G$, contradicting the fact that $C'$ is induced odd cycle 
containing $v_w,v_x,v_y$ and $v_z$.  Hence 
any induced odd cycle of length at least $5$ 
can contain at most $3$ vertices  from $V_X$. 

Since $V_{\cal C}$ is an independent set in $G$, no two vertices 
from $V_{\cal C}$ can occur as consecutive vertices in any cycle. 
Let $C'$ be an induced odd cycle of length at least $5$ in $G$. 
Since $\vert V(C')\cap V_X\vert \leq 3$ and no two vertices 
from $V_{\cal C}$ appear as consecutive  vertices in $C'$, 
we have that $\vert V(C')\cap V_{\cal C}\vert \leq 2$. 
This implies that  the length of $C'$ is exactly equal to $5$ and $C'$ is of the form 
$v_xw_{C_1}^i v_y w_{C_2}^j v_z v_x $, where $i,j\in \{1,2\}$. 
Since $C'$ is an induced cycle $(v_x,v_y), (v_y,v_z) \notin E(G)$. 
This implies that $y=\bar{x}=z$ and hence $v_y=v_z$. This 
contradicts the fact that $C'$ is a cycle. This completes the proof 
of the claim.  
\qed
\end{proof}

Proposition~\ref{prop:perfect_graph_char} and Claims~\ref{claim:no_antihole} and \ref{claim:no_hole} imply that $G$ is a perfect graph. 
We now show that $(\phi,n)$ is a \YES{} instance of {\sc CNF-SAT} if and only if $(G,n,1)$ is a \YES{} instance of 
{\sc Partization Recognition}. 
 
First, suppose that $(\phi,n)$ is a \YES{} instance of {\sc CNF-SAT}. Then there is an assignment $\tau$, such that every clause has at least one literal set to $1$. Let $f : {\cal C}\rightarrow X$ be a 
map that arbitrarily maps one such satisfying literal to each clause. 
Note that for a clause $C$, $(w_C^1,v_{f(C)}),(w_C^2,v_{f(C)})\notin E(G)$, 
because $f(C)\in C$.  
Now we construct $n$ independent sets as follows. 
For each literal $y$, if $\tau(y)=1$, let 
$I_y = \{w_C^1,w_C^2 ~\vert~ f(C) = y\} \cup \{v_y\} $.  
Since $V_{\cal C}$ is an independent set 
$I_y\setminus \{v_y\}$ is an independent set. 
Note that for all $w_C^i\in I_y$, $i\in\{1,2\}$ we have that 
$(w_C^i,v_y)\notin E(G)$, because $f(C)=y$ and $y\in C$. 
This implies that $I_y$ is an independent set. 
Since $\tau$ is an assignment, exactly one of the literals of each variable is assigned $1$ by $\tau$. Thus, in this way we form $n$ independent sets. Since $\tau$ is  a satisfying assignment for $\phi$, the function 
$f$ maps each clause $C$ to a literal which is assigned $1$ by $\tau$. 
This implies that all vertices in $V_{\cal C}$ are covered by the independent sets constructed above.   
The vertices in the graph $G$, which are not covered by the independent 
sets constructed,  correspond to the literals that have been set to $0$ by $\tau$. By construction of $G$ and by the definition of an assignment $\tau$, these vertices form a clique. Hence, $(G,n,1)$ is an  \YES{} instance of 
{\sc Partization Recognition}.

Conversely, suppose $(G,n,1)$ is a \YES{} instance of 
{\sc Partization Recognition}. 
Then there is an \rlpartition{} $\cal P$ of $G$. 
Let $I_1,\ldots,I_n$ be $n$ independent sets and $K$ be a clique 
in the \rlpartition{} $\cal P$. 
It is not possible, by construction of $G$, that there is a variable $x$ such that both $v_x$ and $v_{\bar{x}}$ belong to the clique $K$, 
because $(v_x,v_{\bar{x}})\notin E(G)$.
 As there is only one clique in $\cal P$, at most one literal of each variable can be contained in the clique $K$ of $\cal P$. Hence, for each variable 
$x$   either $v_x$ or $v_{\bar x}$ is part of an independent set in $\cal P$. 
Furthermore, since for two literals $p$ and $q$ such that $p\neq \bar{q}$, 
$(v_p,v_q)\in E(G)$, any independent set $I$ in $\cal P$ can not contain 
both $v_p$ and $v_q$. This implies that each of independent set among 
$n$ independent set can be identified by a variable $x\in X$. Since there 
are only $n$ independent sets in $\cal P$, there cannot be a variable $x$ such that both $v_x$ and $v_{\bar{x}}$ are part of distinct independent sets in $\cal P$.  Thus the construction of $G$ forces only two possibilities for each variable: 
\begin{itemize}
\item[(a)] there is exactly one literal vertex that is part of an independent set  while the other one belong to the clique $K$, or
\item[(b)] both literals together form an independent set. 
\end{itemize}

Now we construct an assignment $\tau$ and show that  
$\tau$ is a satisfying assignment for $\phi$. For a literal $z$, 
if $v_z\in K$, then we set $\tau(z)=0$. If for a variable $x$, 
both vertices $v_x$ and $v_{\bar{x}}$ do not belong to $K$ 
then we set  $\tau(x)=1$. 
Now we show that $\tau$ is a satisfying assignment for $\phi$.  Let $C$ be a clause in the formula $\phi$.  Since $(w_C^1,w_C^2)\notin E(G)$ at least one of $w_C^1$ or $w_C^2$ 
belongs to an independent set $I$ in $\cal P$. Let 
$w_C^i\in I$, where $i\in\{1,2\}$. We have shown that each independent set contains at least one vertex corresponding to a  
literal $y$. Since $v_y$ and $w_C^i$ belong to $I$, we have that $(v_z,w_C^i)\notin E(G)$. This implies that 
$y \in C$. Furthermore, this implies that $v_{\bar{y}}\notin I$ as no clause contains both $y$ and $\bar{y}$. Hence, we have that 
$v_{\bar{y}}$ is in $K$ and thus $\tau(\bar{y})=0$. This implies that $y$ is set to $1$ and hence the clause $C$ is satisfied.  This proves that $\tau$ is a satisfying assignment for $\phi$. 
\qed
\end{proof}

Note that {\sc Partization Recognition} is same as \rlpartization{}, when $k=0$. Hence we get the following corollary.    

\begin{corollary}
 \rlpartization{} parameterized by $k+r+\ell$ on perfect graphs does not have a polynomial kernel unless \containment. 
\end{corollary}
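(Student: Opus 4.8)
The plan is to obtain this as an immediate consequence of Theorem~\ref{lem:no_kernel}, by exhibiting a trivial polynomial parameter transformation that embeds {\sc Partization Recognition} into \rlpartization{} as the slice $k=0$. First I would record the exact correspondence: an instance $(G,r,\ell)$ of {\sc Partization Recognition} asks whether $G$ is an \rlgraph, and this holds if and only if there is a set $S\subseteq V(G)$ with $|S|\le 0$---that is, $S=\emptyset$---such that $G-S$ is an \rlgraph. Hence $(G,r,\ell)$ is a \YES{} instance of {\sc Partization Recognition} precisely when $(G,r,\ell,0)$ is a \YES{} instance of \rlpartization.

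Next I would package this correspondence as a ppt. Let $\mathcal{A}$ map $(G,r,\ell)$ to $(G,r,\ell,0)$; it runs in polynomial time, leaves $G$ (and hence its perfectness) untouched, and preserves the answer by the equivalence above. The source parameter is $r+\ell$, while the target parameter is $k+r+\ell=r+\ell$ when $k=0$, so the output parameter is bounded by $(r+\ell)^{\Oh(1)}$; thus $\mathcal{A}$ is a polynomial parameter transformation from {\sc Partization Recognition} (parameterized by $r+\ell$) to \rlpartization{} (parameterized by $k+r+\ell$).

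Finally I would invoke Proposition~\ref{prop:ppt_redn} with $\Pi=${\sc Partization Recognition} on perfect graphs and $\Gamma=$\rlpartization{} on perfect graphs. Its hypotheses are met: $\Pi$ is \NP-hard on perfect graphs (as recalled in the introduction), $\Gamma\in\NP$ because a candidate deletion set $S$ together with an \ICpartition{} of $G-S$ is a polynomial-size certificate checkable in polynomial time, and $\mathcal{A}$ is a ppt from $\Pi$ to $\Gamma$. Since Theorem~\ref{lem:no_kernel} establishes that $\Pi$ has no polynomial kernel unless \containment, Proposition~\ref{prop:ppt_redn} yields the same conclusion for $\Gamma$, which is exactly the statement of the corollary.

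The argument carries no genuine obstacle: all of the real content---the reduction from {\sc CNF-SAT} and the verification that the constructed graph is perfect---already resides in Theorem~\ref{lem:no_kernel}. The only points demanding a moment of care are confirming that the source problem is \NP-hard (so that Proposition~\ref{prop:ppt_redn} applies) and that the identity-like map genuinely meets the definition of a ppt, in particular that its output parameter is polynomially bounded, which here is immediate.
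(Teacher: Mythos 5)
Your proposal is correct and follows exactly the paper's route: the paper derives the corollary in one line by observing that {\sc Partization Recognition} is the $k=0$ slice of \rlpartization{}, which is precisely the identity-like ppt you spell out. You simply make explicit the verification of the ppt definition and the hypotheses of Proposition~\ref{prop:ppt_redn} that the paper leaves implicit.
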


However for {\sc Vertex $(r,\ell)$ Partization} we give a polynomial kernel as mentioned 
in the following theorem and its proof can be found in Appendix~\ref{sec:polykernel}.

\begin{theorem}
\label{thm:poly_kernel}
{\sc Vertex $(r,\ell)$ Partization} on perfect graphs 
admits a kernel of size $k^{\Oh(d)}$ 
and  has  an algorithm with running time $d^k n^{\Oh(d)}$. Here, $d=f(r,\ell)$. 
\end{theorem}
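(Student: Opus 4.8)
The plan is to exploit the equivalence, valid on perfect graphs, between being an \rlgraph{} and being an \rlsplitgraph, and then treat the resulting problem as a hitting problem with bounded-size sets. First I would prove the following structural fact: for a perfect graph $H$ and a partition $(V_1,V_2)$ of $V(H)$, the graph $H[V_1]$ splits into $r$ independent sets if and only if $\omega(H[V_1])\le r$, and $H[V_2]$ splits into $\ell$ cliques if and only if $\alpha(H[V_2])\le \ell$. Indeed $H[V_1]$ is perfect (perfection is hereditary), so $\chi(H[V_1])=\omega(H[V_1])$; and by Lemma~\ref{pg:complement} the complement $\overline{H}$ is perfect, hence so is $\overline{H}[V_2]$, whose chromatic number equals its clique number $\alpha(H[V_2])$. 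Since this holds for the same partition under both definitions, a perfect graph is an \rlgraph{} exactly when it is an \rlsplitgraph, and because deleting vertices preserves perfection, the same equivalence holds for $G-S$. Thus, on perfect graphs, $S$ is an \rldeletion{} if and only if $G-S$ contains no induced copy of any $H\in\F_{r,\ell}$, the finite forbidden family of~\cite{Gyarfas98}, each member of which has at most $d=f(r,\ell)$ vertices.

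This recasts {\sc Vertex $(r,\ell)$ Partization} on perfect graphs as the problem of deleting at most $k$ vertices so as to hit every induced occurrence of a graph in $\F_{r,\ell}$. I call a vertex set inducing some member of $\F_{r,\ell}$ an \emph{obstruction}; every obstruction has at most $d$ vertices, the family $\mathcal{F}$ of all obstructions can be listed in $n^{\Oh(d)}$ time by inspecting every vertex subset of size at most $d$, and $G-S$ is an \rlgraph{} precisely when $S$ meets every obstruction. The claimed algorithm then follows from bounded-depth branching: find an obstruction, branch on which of its at most $d$ vertices to delete, and recurse with $k$ decreased by one; the recursion tree has at most $d^k$ leaves and each node costs $n^{\Oh(d)}$, giving running time $d^k n^{\Oh(d)}$.

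For the kernel I would run a sunflower-based reduction on $\mathcal{F}$, organised by obstruction size. For each size $s\le d$, as long as the number of obstructions of that size exceeds $s!\,(k+1)^s$, the Sunflower Lemma yields a sunflower with $k+2$ petals and core $Y$; any solution of size at most $k$ that hits the other $k+1$ petals must meet $Y$ (otherwise it would need one vertex in each of the $k+1$ pairwise-disjoint petal parts), and since $Y$ is contained in the discarded petal I can safely delete that petal from $\mathcal{F}$. Iterating leaves a subfamily $\mathcal{F}'\subseteq\mathcal{F}$ with $|\mathcal{F}'|=k^{\Oh(d)}$ such that every $k$-sized hitting set of $\mathcal{F}'$ still hits all of $\mathcal{F}$. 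Letting $Z=\bigcup\mathcal{F}'$, which has $k^{\Oh(d)}$ vertices, I would output $(G[Z],k)$: the graph $G[Z]$ is perfect, its obstructions are exactly the obstructions of $G$ contained in $Z$ and hence include all of $\mathcal{F}'$. A solution of $G$ intersected with $Z$ hits those obstructions, and conversely any solution of $G[Z]$ hits $\mathcal{F}'$ and therefore, by the defining property of $\mathcal{F}'$, hits all of $\mathcal{F}$, so it is already a solution of $G$. This yields equivalence and a kernel on $k^{\Oh(d)}$ vertices.

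The step I expect to require the most care is the conversion of the hitting-set reduction into an honest graph kernel. The naive move of replacing a sunflower by its core fails here, because a core need not induce a member of $\F_{r,\ell}$; the reduced family would then contain sets that are not obstructions, and $G[Z]$ would no longer reflect it. The fix is to delete whole petals, which are genuine obstructions, so that $\mathcal{F}'\subseteq\mathcal{F}$ throughout, the correctness resting on the simple counting fact that $k$ vertices cannot meet $k+1$ disjoint petal parts and are therefore forced into the core. I would also verify both directions of the final equivalence carefully, in particular that restricting a solution of $G$ to $Z$ still hits every obstruction lying inside $Z$, which holds since hitting is preserved when an obstruction is contained in the retained vertex set.
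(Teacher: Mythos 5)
Your proposal is correct and follows essentially the same route as the paper: establish that on perfect graphs the classes of \rlgraph{s} and \rlsplitgraph{s} coincide, invoke the finite forbidden family $\F_{r,\ell}$ of~\cite{Gyarfas98} with members of size at most $d=f(r,\ell)$, and reduce to $d$-{\sc Hitting Set}. The only difference is one of detail: the paper simply cites the known $d$-{\sc Hitting Set} kernel of~\cite{Abu-Khzam10}, whereas you work out an explicit sunflower-based kernel and carefully convert the reduced set system back into an induced-subgraph instance $(G[Z],k)$ --- a point the paper glosses over but which your treatment handles correctly.
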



\section{Polynomial kernel when $r$ and $\ell$ are constants}
\label{sec:polykernel}

We saw that there is no polynomial kernel for \rlpartization{}, unless \containment. 
The parameter for this problem is $k+r+\ell$, where the size of the deletion set is at most $k$ and the final graph 
is an \rlgraph. In this section, we consider the {\sc Vertex $(r,\ell)$ Partization} problem on perfect graphs, which is 
a special case of \rlpartization{} on perfect graphs. Here, for a given pair of fixed positive constants $(r,\ell)$, we take 
a perfect graph $G$ and a positive integer $k$ as input and decide whether there is a vertex set $S$ of size at most 
$k$ the deletion of which results in an \rlgraph. This simplified problem has a polynomial kernel, as shown below.

We first observe that when perfect graphs are \rlgraph{s}, this class coincides with another graph class, namely the 
class of perfect graphs that are \rlsplitgraph{s}. The notion of \rlsplitgraph{s} was introduced in~\cite{Gyarfas98}. 
\begin{definition}[\rlsplitgraph]
 A graph $G$ is an \rlsplitgraph{} if its vertex set can be partitioned into $V_1$ and $V_2$ such that the size of a largest clique in $G[V_1]$ is bounded by $r$
 and the size of a largest independent set in $G[V_2]$ is bounded by $\ell$. We call such a 
 partition $(V_1,V_2)$ as an \rlsplitpartition.
\end{definition}
From the definition of \rlsplitgraph, it follows that any \rlgraph{}  is also 
an  \rlsplitgraph. 
\begin{lemma}
 Let $G$ be a perfect graph. If $G$ is an \rlsplitgraph, then $G$ is also an 
 \rlgraph.
\end{lemma}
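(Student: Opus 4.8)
The plan is to split the argument along the given \rlsplitpartition{} $(V_1,V_2)$ and treat the two sides independently, using perfection to convert the bounds on clique and independent-set sizes into the desired partition into independent sets and cliques. First I would handle $V_1$. Since perfection is hereditary (it is defined through \emph{all} induced subgraphs), the induced subgraph $G[V_1]$ is perfect, so $\chi(G[V_1]) = \omega(G[V_1])$. By the definition of an \rlsplitpartition{} we have $\omega(G[V_1]) \le r$, hence $\chi(G[V_1]) \le r$. A proper colouring with at most $r$ colours is exactly a partition of $V_1$ into at most $r$ independent sets (padding with empty classes if fewer colours are used).

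The side requiring one extra idea is $V_2$, where I would pass to the complement. The graph $G[V_2]$ is again perfect, so by Lemma~\ref{pg:complement} its complement $\overline{G[V_2]}$ is perfect as well, giving $\chi(\overline{G[V_2]}) = \omega(\overline{G[V_2]})$. Now $\omega(\overline{G[V_2]}) = \alpha(G[V_2]) \le \ell$, since a clique in the complement is precisely an independent set in $G[V_2]$, and the \rlsplitpartition{} bounds the latter by $\ell$. Thus $\chi(\overline{G[V_2]}) \le \ell$, and a proper $\ell$-colouring of $\overline{G[V_2]}$ partitions $V_2$ into at most $\ell$ independent sets of $\overline{G[V_2]}$, which are exactly cliques of $G[V_2]$.

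Finally I would assemble the two pieces. Since $(V_1,V_2)$ partitions $V(G)$, the at most $r$ independent sets covering $V_1$ together with the at most $\ell$ cliques covering $V_2$ form an \rlpartition{} of $G$, so $G$ is an \rlgraph. The only place any care is needed is the $V_2$ step: one must recognise that a partition into cliques is a colouring of the complement and invoke closure of perfection under complementation; everything else follows immediately from the hereditary definition of perfect graphs. I therefore expect no genuine obstacle and a very short argument overall.
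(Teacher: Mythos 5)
Your argument is correct and is essentially identical to the paper's own proof: both use heredity of perfection to get $\chi(G[V_1])=\omega(G[V_1])\le r$ on the first part, and pass to the complement of the second part (via closure of perfection under complementation) to convert $\alpha(G[V_2])\le\ell$ into a partition into at most $\ell$ cliques. No gaps; nothing further to add.
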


\begin{proof}
Since $G$ is a perfect graph, for any induced subgraph $G'$ of  $G$, 
the chromatic number of $G'$ ($\chi(G')$) is equal to the 
cardinality of a maximum sized clique of $G'$ ($\omega(G')$). 
We know that $G$ is an \rlsplitgraph. 
Let $(P_1,P_2)$ be an \rlsplitpartition{}  with $\omega(G[P_1]) \leq r$ and  
$\alpha(G[P_2]) \leq \ell$
Now we show that indeed $(P_1,P_2)$ is an \rlpartition{} of $G$. 
Since $G$ is a perfect graph, the graphs $G[P_1]$ and $\overline{G}[P_2]$ are perfect 
graphs. Since $G[P_1]$ is a perfect graph and $w(G[P_1])\leq r$, we have that 
$\chi(G[P_1])\leq r$. This implies that $P_1$ can be partitioned into 
$r$ independent sets. Since $\overline{G}[P_2]$ is a perfect graph and 
$\alpha(G[P_2])\leq \ell$, we have that $w(\overline{G}[P_2])\leq \ell$ and 
so $\chi(\overline{G}[P_2])\leq \ell$. This implies that $P_2$ 
can be partitioned into $\ell$ sets such that each set is independent in 
$\overline{G}[P_2]$. Hence $P_2$ can be partitioned in to $\ell$ 
cliques in $G[P_2]$. So $(P_1,P_2)$ is an \rlpartition{} of $G$. 
This completes the proof of the lemma. 
%
\qed
\end{proof}

For any fixed $r$ and $\ell$, there is a finite forbidden set $\F_{r,\ell}$ for \rlsplitgraph{s}~\cite{Gyarfas98}. 
That is, a graph $G$ is an \rlsplitgraph{} if and only if $G$ does not contain any graph $H\in \F_{r,\ell}$ as an 
induced subgraph. The size of the largest forbidden graph is bounded by $f(r,\ell)$, $f$ being a function given 
in~\cite{Gyarfas98}. Since $f(r,\ell)$ is a constant, it is possible to compute the forbidden set $\F_{r,\ell}$ 
in polynomial time. 
Thus, the class of perfect \rlgraph{s} also has a finite forbidden characterization. This implies that 
{\sc Vertex $(r,\ell)$ Partization} on perfect graphs reduces to the $d$-{\sc Hitting Set} problem, 
where $d$ is the constant $f(r,\ell)$. In an equivalent $d$-{\sc Hitting Set} instance, the universe 
will be the set of vertices of the input graph $G$, while the family of sets will be the vertex sets of induced 
subgraphs of $G$ that are isomorphic to a forbidden graph. The set sizes are bounded by 
$f(r,\ell)$. 
Hence, by \cite{Abu-Khzam10}, this problem has a polynomial kernel. This gives us the following theorem.

\medskip
\noindent
{\bf Theorem~\ref{thm:poly_kernel}.}
{\em {\sc Vertex $(r,\ell)$ Partization} on perfect graphs 
admits a kernel of size $k^{\Oh(d)}$ 
and  has  an algorithm with running time $d^k n^{\Oh(d)}$. Here, $d=f(r,\ell)$. 
}

  \section{Conclusion}
  In this paper we studied the {\sc Vertex Partization} problem on perfect graphs and showed that it is \FPT\ and does not admit a polynomial kernel. Furthermore, we also observed that  {\sc Vertex $(r,\ell)$ Partization} has an induced  finite forbidden characterization and utilized that to give a faster \FPT\ algorithm and a polynomial kernel for the problem.  However,  the algorithms for  
  {\sc Vertex $(r,\ell)$ Partization} has a factor of $n^{\Oh(d)}$, where $d$ depends on the size of a largest graph in the finite forbidden set. It would  be interesting to replace  the factor $n^{\Oh(d)}$ by $\tau(d) \cdot n^{\Oh(1)}$. 


\bibliography{references,ref}

\begin{thebibliography}{10}

\bibitem{Abu-Khzam10}
F.~N. Abu{-}Khzam.
\newblock A kernelization algorithm for d-hitting set.
\newblock {\em J. Comput. Syst. Sci.}, 76(7):524--531, 2010.

\bibitem{DBLP:journals/corr/BasteFKS15}
J.~Baste, L.~Faria, S.~Klein, and I.~Sau.
\newblock Parameterized complexity dichotomy for {\textdollar}(r,
  {\textbackslash}ell){\textdollar}-vertex deletion.
\newblock {\em CoRR}, abs/1504.05515, 2015.

\bibitem{BodlaenderTY11}
H.~L. Bodlaender, S.~Thomass{\'{e}}, and A.~Yeo.
\newblock Kernel bounds for disjoint cycles and disjoint paths.
\newblock {\em Theor. Comput. Sci.}, 412(35):4570--4578, 2011.

\bibitem{Brandstadt98thecomplexity}
A.~Brandstädt, V.~B. Le, and T.~Szymczak.
\newblock The complexity of some problems related to graph 3-colorability,
  1998.

\bibitem{DBLP:conf/mfcs/ChenKX06}
J.~Chen, I.~A. Kanj, and G.~Xia.
\newblock Improved upper bounds for vertex cover.
\newblock {\em Theor. Comput. Sci.}, 411(40-42):3736--3756, 2010.

\bibitem{CRST06}
M.~Chudnovsky, N.~Robertson, P.~D. Seymour, and R.~Thomas.
\newblock The strong perfect graph theorem.
\newblock {\em Annals of Mathematics}, 164:51--229, 2006.

\bibitem{DBLP:journals/ipl/CyganP13}
M.~Cygan and M.~Pilipczuk.
\newblock Split vertex deletion meets vertex cover: New fixed-parameter and
  exact exponential-time algorithms.
\newblock {\em Inf. Process. Lett.}, 113(5-6):179--182, 2013.

\bibitem{diestel}
R.~Diestel.
\newblock {\em Graph Theory}.
\newblock Springer, Berlin, second ed., electronic edition, February 2000.

\bibitem{Feder03listpartitions}
T.~Feder, P.~Hell, S.~Klein, and R.~Motwani.
\newblock List partitions.
\newblock {\em STOC}, 16:464--472, 2003.

\bibitem{DBLP:journals/endm/FederHR11}
T.~Feder, P.~Hell, and S.~N. Rizi.
\newblock Partitioning chordal graphs.
\newblock {\em Electronic Notes in Discrete Mathematics}, 38:325--330, 2011.

\bibitem{FortnowS11}
L.~Fortnow and R.~Santhanam.
\newblock Infeasibility of instance compression and succinct pcps for {NP}.
\newblock {\em J. Comput. Syst. Sci.}, 77(1):91--106, 2011.

\bibitem{Garey:1979:CIG:578533}
M.~R. Garey and D.~S. Johnson.
\newblock {\em Computers and Intractability: A Guide to the Theory of
  NP-Completeness}.
\newblock W. H. Freeman \& Co., New York, NY, USA, 1979.

\bibitem{DBLP:conf/swat/GhoshKKMPRR12}
E.~Ghosh, S.~Kolay, M.~Kumar, P.~Misra, F.~Panolan, A.~Rai, and M.~S.
  Ramanujan.
\newblock Faster parameterized algorithms for deletion to split graphs.
\newblock {\em Algorithmica}, 71(4):989--1006, 2015.

\bibitem{GrotschelLS84}
M.~Gr{\"{o}}tschel, L.~Lov{\'{a}}sz, and A.~Schrijver.
\newblock Polynomial algorithms for perfect graphs.
\newblock {\em Annals of Discrete Mathematics}, 21:325--356, 1984.

\bibitem{Gyarfas98}
A.~Gy{\'{a}}rf{\'{a}}s.
\newblock Generalized split graphs and ramsey numbers.
\newblock {\em J. Comb. Theory, Ser. {A}}, 81(2):255--261, 1998.

\bibitem{HeggernesKLRS13}
P.~Heggernes, D.~Kratsch, D.~Lokshtanov, V.~Raman, and S.~Saurabh.
\newblock Fixed-parameter algorithms for cochromatic number and disjoint
  rectangle stabbing via iterative localization.
\newblock {\em Inf. Comput.}, 231:109--116, 2013.

\bibitem{KolayP15}
S.~Kolay and F.~Panolan.
\newblock Parameterized algorithms for deletion to $(r, \ell)$-graphs.
\newblock {\em CoRR, to appear in FSTTCS 2015}, abs/1504.08120, 2015.

\bibitem{DBLP:journals/jgaa/KrithikaN13}
R.~Krithika and N.~S. Narayanaswamy.
\newblock Parameterized algorithms for (r, l)-partization.
\newblock {\em J. Graph Algorithms Appl.}, 17(2):129--146, 2013.

\bibitem{LewisY80}
J.~M. Lewis and M.~Yannakakis.
\newblock The node-deletion problem for hereditary properties is np-complete.
\newblock {\em J. Comput. Syst. Sci.}, 20(2):219--230, 1980.

\bibitem{LokshtanovNRRS14}
D.~Lokshtanov, N.~S. Narayanaswamy, V.~Raman, M.~S. Ramanujan, and S.~Saurabh.
\newblock Faster parameterized algorithms using linear programming.
\newblock {\em {ACM} Transactions on Algorithms}, 11(2):15, 2014.

\bibitem{Lovasz72}
L.~Lov{\'a}sz.
\newblock A characterization of perfect graphs.
\newblock {\em J. Comb. Theory, Ser. B}, 13(2):95--98, 1972.

\bibitem{ReedSV04}
B.~A. Reed, K.~Smith, and A.~Vetta.
\newblock Finding odd cycle transversals.
\newblock {\em Oper. Res. Lett.}, 32(4):299--301, 2004.

\end{thebibliography}

%



\end{document}